\documentclass[lettersize,journal]{IEEEtran}
\usepackage{algorithmic}
\usepackage{algorithm}
\usepackage{array}
\usepackage{booktabs} 
\usepackage{soul}
\usepackage{amsmath,amsfonts,amssymb}
\usepackage[caption=false,font=footnotesize]{subfig}
\usepackage{textcomp}
\usepackage{stfloats}
\usepackage{bbm}
\usepackage{url}
\usepackage{verbatim}
\usepackage{graphicx}
\usepackage[style=ieee,backend=biber,citestyle=ieee-comp]{biblatex}
\DeclareFieldFormat{doi}{} 
\usepackage{cleveref}
\crefname{figure}{Fig.}{Figs.}
\Crefname{figure}{Fig.}{Figs.}
\crefname{equation}{Eq.}{Eqs.}
\Crefname{equation}{Eq.}{Eqs.}
\crefname{appendix}{Appendix}{Appendices}
\Crefname{appendix}{Appendix}{Appendices}
\usepackage{amsthm}
\usepackage{pgfplots}
\pgfplotsset{compat=1.18}

\newtheoremstyle{mylemma}
  {0pt}      
  {0pt}      
  {\itshape} 
  {}         
  {\bfseries}
  {:}        
  {0.5em}    
  {\thmname{#1}\thmnumber{ #2}}

\theoremstyle{mylemma}
\newtheorem{lemma}{Lemma}
\crefname{lemma}{Lemma}{Lemmas}
\Crefname{lemma}{Lemma}{Lemmas}
\usepackage{xcolor}

\makeatletter

\def\section{\@startsection{section}{1}{\z@}%
{1.5ex plus 0.2ex minus 0.2ex}%
{0.5ex plus 0.1ex minus 0.1ex}%
{\normalfont\normalsize\centering\scshape}}

\def\subsection{\@startsection{subsection}{2}{\z@}%
{1.2ex plus 0.2ex minus 0.2ex}%
{0.4ex plus 0.1ex minus 0.1ex}%
{\normalfont\normalsize\itshape}}

\def\subsubsection{\@startsection{subsubsection}{3}{\parindent}%
{0ex plus 0.1ex minus 0.1ex}%
{0ex}%
{\normalfont\normalsize\itshape}}
\def\@IEEEfiguretopskipspace{}
\makeatother

\begin{document}

\title{Adaptive Payload-Aided Near-Field Beam Tracking via Thompson Sampling}

\author{Junchi Liu, Zijun Wang, Shawn Tsai, Rui Zhang,~\IEEEmembership{Member,~IEEE}
\thanks{Junchi Liu, Zijun Wang and Rui Zhang are with the Department of Electrical Engineering, The State University of New York at Buffalo, New York, USA (email: {junchili@buffalo.edu}, {zwang267@buffalo.edu}, {rzhang45@buffalo.edu}).}
\thanks{Shawn Tsai is with the CSD, MediaTek Inc. USA, San Diego, CA 92122 USA (e-mail: shawn.tsai@mediatek.com). }
\thanks{This work was supported in part by the CSD, MediaTek Inc. USA, under Grant 103764, and in part by the National Science Foundation, under Grant ECCS 2512911, IIS-2549124.}
\thanks{Corresponding author: Rui Zhang.}}

\markboth{IEEE Transactions on Wireless Communications}%
{Liu \MakeLowercase{\textit{et al.}}: Adaptive Payload-Aided Near-Field Beam Tracking via Thompson Sampling}


\maketitle

\begin{abstract}
Extremely large antenna arrays operating at high frequencies substantially extend the radiative near-field region in 6G networks, making mobile beam alignment depend jointly on the user's
angle and range. 
The additional range dimension
enlarges the beam-search space and makes repeated pilot-based beam
sweeping particularly costly under mobility, while sensing-assisted
tracking relies on echo observations whose quality depends on the
propagation environment and target reflectivity. To address these
limitations, we propose an adaptive payload-aided near-field
beam-tracking framework based on maximum likelihood estimation (MLE)
and Thompson sampling (TS). Selected received payload samples are fed back by the user and reused as tracking observations, thereby avoiding dedicated non-payload beam-sweeping symbols during the tracking stage. Within each sliding observation window, the local angle and range trajectories are represented by low-order polynomials, thereby capturing short-term velocity, acceleration, and higher-order motion variations, and their
coefficients are estimated through MLE using the spherical-wave channel model.
A local Gaussian approximation centered at the maximum-likelihood estimate, with covariance given by the inverse observed Fisher information, provides a tractable local representation of trajectory uncertainty. 
For payload transmissions selected for receiver feedback, TS draws a trajectory hypothesis and maps it to a payload beam directed toward the sampled state; the remaining transmissions use the MLE-predicted beam for exploitation. 
To handle nonstationary mobility, an asymptotic chi-square characterization of in-window estimation risk motivates the coordination of the observation-window length and polynomial degree, while online residual statistics guide the adaptation of the update interval and feedback ratio.
The framework is further generalized to uniform planar arrays by incorporating elevation tracking. Simulations under smooth and sharp-turn trajectories demonstrate high payload-accounted mean normalized beamforming gain, low payload-accounted normalized-gain variance, and high effective-symbol reliability, while the adaptive mechanism provides substantial robustness gains under nonstationary sharp-turn mobility.
\end{abstract}

\begin{IEEEkeywords}
Near-field communication, beam tracking, Thompson sampling, MLE, Fisher information, millimeter-wave
\end{IEEEkeywords}

\section{Introduction}
\label{sec:intro}
\IEEEPARstart{T}{he} deployment of extremely large aperture arrays (ELAAs) and the
shift toward higher frequency bands, such as millimeter-wave and
terahertz bands, are key enablers for future wireless networks.
As the physical aperture of an array increases, the radiative
near-field region expands and may cover a substantial portion of the
typical service area of user equipment (UE)
\cite{Tutorial_with_beam_pattern_and_resolution,General_tutorial}. Unlike the far-field steering vector, which is primarily determined by the angular direction, the near-field steering vector depends jointly on the angle and range of the UE \cite{far-field-assumption}. Consequently,
near-field beam acquisition must resolve an additional radial
dimension.  Although structured codebooks can reduce the
number of tested beams, the additional range dimension fundamentally
enlarges the beam-management space. This enlargement becomes particularly consequential under mobility. Both the angular and radial coordinates evolve over time, while the narrow near-field beams generated by ELAAs make the beamforming gain highly sensitive to state mismatch. Moreover, under spherical-wave propagation, the propagation distance and its temporal derivative can differ across antenna elements, leading to spatially non-uniform mobility-induced channel-phase evolution and Doppler effects \cite{doppler_on_h}. Periodically repeating an angle--range beam sweep can therefore consume substantial non-payload resources and interrupt
data transmission. Near-field beam tracking is needed to exploit
temporal correlation and maintain alignment without repeatedly solving the enlarged beam-acquisition problem from scratch.

A variety of polar-domain codebooks have been developed for near-field channel estimation and beam training to reduce acquisition overhead \cite{kaustnearcb,cui2022farornf,chen2025beamspace}. Staged schemes first identify angular candidates and subsequently refine the range
\cite{zhang2022fastNFBT}, while DFT-based methods exploit near-field
beam patterns to infer angle and range jointly \cite{Joint_Angle_and_Range_Estimation}. Other approaches construct
spatial-chirp or slope-intercept ($k$--$b$)-domain codebooks \cite{k-b_domain_codebook}, hierarchical codebooks covering different
portions of the Fresnel region \cite{ChenHierarchical2023,weng2024nearfield,LuHierarchical2024,WuTwoStage2024}, or sparse DFT codebooks that resolve angular ambiguity before radial refinement \cite{ZhouSparseDFT2025}. These methods substantially reduce the cost of beam acquisition at a given UE location. Under mobility, however, beam refinement must be invoked repeatedly as both angle and range change.

 Beam tracking has been studied more extensively for far-field channels. Existing methods include particle-filter tracking with adaptive beamwidth control \cite{chung2021adaptive}; dual-timescale Kalman tracking of slowly varying angles of departure and arrival with abrupt-change detection \cite{Zhang2016AngleTracking}; sparse Bayesian channel tracking with Kalman filtering and smoothing, coupled with linear TS for selecting training beams \cite{Love_paper}; auxiliary-beam-pair angle tracking \cite{Zhu2018HighResolutionTracking}; and learned
continuous-time beam predictors based on historical measurements
\cite{LNN_tracking}. These methods exploit temporal observations to
avoid repeated full angular searches, but they generally operate on an angle-only far-field state and do not explicitly address the coupled angle--range dynamics of a mobile near-field channel. 

 Several recent works have considered near-field beam tracking directly.
The method in \cite{UPA_NF} estimates the effective beam coherence time from the predicted beamforming gain and triggers near-field beam sweeping when the gain is expected to fall below a prescribed threshold. A dynamic non-uniform coordinate grid is then constructed around the predicted user region to support local beam refinement. In \cite{sub_array_li}, a subarray architecture is combined with an extended Kalman filter (EKF) to predict the user's position and velocity. One uplink pilot is collected in each tracking cycle to update the state estimate and the subsequent beam. The method in \cite{YiMobility2025} uses prior information, including the deterministic trajectory of a high-speed railway scenario, to predict the beam direction and channel amplitude, and employs an error-detection module to correct prediction errors.
Online learning-based methods have also been introduced for near-field beam tracking. In \cite{wang2025dl}, an LSTM uses historical beam measurements to predict the angle and range in the next period, followed by a local beam search to update the tracking result. In \cite{park2024thz}, near-field beam tracking is formulated as a sequential decision problem, and a deep Q-network learns beam-update actions from online interaction.  These methods reduce the search region or improve robustness to mobility, but their signaling and computational requirements remain coupled to dedicated pilots, local beam searches, or online learning interactions. 

 Sensing-assisted predictive beamforming provides another means of reducing communication-pilot overhead. The method in \cite{tvt2024tracking} estimates radial and transverse velocities from echo observations and predicts the UE position for beamforming in the subsequent coherent processing interval. Near-field sensing has further been extended to full-motion-state estimation and multi-interval tracking using alternating optimization, nonlinear filtering, and learning methods
\cite{jiang2024nise,jiang2025niseframework}. Such schemes can reduce
communication beam sweeping, but their reliability depends on the
quality of sensing echoes, which can be affected by propagation loss, clutter, sensing geometry, and the radar cross section (RCS) or reflectivity of the sensing target.

The above approaches leave an important operating point insufficiently addressed: maintaining near-field alignment from communication observations without repeatedly inserting dedicated beam-sweeping intervals or relying on sensing echoes. In such a payload-aided scheme, each transmit beam affects both the instantaneous payload beamforming gain and the informativeness of the resulting feedback observation.
Steering exclusively toward the current point estimate is effective when the estimate is accurate. However, such pure exploitation may provide limited information about alternative angle--range locations. Consequently, prediction errors caused by abrupt motion or local-model mismatch may accumulate rather than being corrected and eventually cause track loss. Conversely, broadly probing the angle--range domain improves tracking information at the cost of instantaneous beamforming gain. This creates a communication--information tradeoff between exploiting the current trajectory estimate and exploring its uncertainty. Although linear TS was previously used in \cite{Love_paper} to select training beams from a discrete feasible beam set while a sparse angular channel was tracked through sparse Bayesian learning and Kalman filtering and smoothing, our setting differs in two respects: the unknown state is a continuous near-field trajectory in coupled angle and range, and every probing action is a payload-carrying beam. The tracker must therefore map posterior uncertainty in continuous trajectory parameters to payload-beam decisions while preserving communication connectivity.

To address this problem, we propose an adaptive payload-aided
near-field beam-tracking framework based on maximum likelihood
estimation (MLE) and TS. Selected
received payload samples are fed back by the UE and reused for local
trajectory estimation, uncertainty-aware beam probing, and online
tracking-parameter adaptation. Thus, no dedicated non-payload
beam-sweeping symbols are inserted during steady-state tracking.
Our preliminary study \cite{asilomar} introduced the
 constant-parameter ULA version of the MLE-based TS tracker, in
 which the observation-window length, polynomial degree, update
interval, and feedback ratio were fixed. 
Compared with
\cite{asilomar}, this article develops an adaptive tracking-parameter
mechanism, provides additional analytical characterizations of the
Gaussian posterior approximation and  an
asymptotic characterization of in-window estimation-risk
scaling, and extends the framework from two-dimensional ULA
tracking to three-dimensional UPA tracking.

The main contributions of this paper are summarized as follows.

\begin{itemize}
\setlength{\itemsep}{0pt}
\setlength{\parsep}{0pt}
\setlength{\parskip}{0pt}
    \item   We develop a payload-aided local trajectory estimator for near-field beam tracking. Within each sliding observation window, the time-varying UE angle and range are represented by low-order polynomials, whose coefficients are estimated from selected received payload samples through MLE. The temporal derivatives of the polynomial trajectories can represent local velocity, acceleration, and higher-order motion variations. Embedding these trajectories into the spherical-wave channel model also captures the associated element-dependent channel-phase evolution without introducing a separate Doppler state.

    \item  We develop an uncertainty-aware payload-beam selection strategy based on TS. A Gaussian approximation to the trajectory parameter posterior is constructed around the MLE, with covariance determined by the inverse observed Fisher information matrix. For payload symbols selected for receiver feedback, posterior samples are mapped to probing beams that acquire information about uncertain directions; the remaining payload symbols use the MLE-predicted beam for exploitation. This design integrates information acquisition with payload delivery without inserting separate downlink beam-sweeping symbols.

    \item   We develop an adaptive tracking-parameter mechanism for nonstationary mobility. An asymptotic chi-square relation provides an in-window estimation-risk budget for selecting model complexity.
    This relation motivates the coordination of the
    observation-window length and polynomial model dimension, while empirical
    residual statistics serve as online indicators of model mismatch
    and temporal prediction deterioration when adapting the update interval and feedback ratio.
    
    \item We generalize the proposed framework from ULA-based two-dimensional angle--range tracking to UPA-based three-dimensional azimuth--elevation--range tracking. Simulations under smooth and sharp-turn trajectories demonstrate high payload-accounted mean normalized beamforming gain, low payload-accounted normalized-gain variance, and high effective-symbol reliability, with substantial robustness gains under nonstationary mobility with sharp-turn trajectories.
\end{itemize}

The rest of this paper is organized as follows. \Cref{sec:sysprob} presents the near-field channel model and formulates the payload-aided
beam tracking problem. \Cref{sec:method} develops the MLE-based trajectory estimator and the constant-parameter TS tracking scheme. \Cref{sec:adaptive} presents the adaptive
tracking-parameter adjustment mechanism.
\Cref{sec:sim} evaluates the
proposed methods under the ULA setting and compares them with
representative baselines. \Cref{sec:upa} generalizes the framework to
the UPA architecture and presents the corresponding numerical results. Finally, \Cref{sec:con} concludes this paper.

\section{System Model and Problem Formulation}
\label{sec:sysprob}
This section introduces the near-field channel model for a mobile UE and formulates the beam tracking problem.

\subsection{Channel Model}\label{subsec:channel}
We consider a near-field multiple-input single-output (MISO) downlink system as illustrated in Fig.~\ref{channel}, where the base station (BS) is equipped with an $N$-element uniform linear array (ULA) and the moving UE has a single antenna.
Following the criterion in \cite{fresnel}, we characterize the near-field region as the distance interval between the Fresnel distance and the Rayleigh distance, i.e.,
$R_{\mathrm{Fre}}=\frac{1}{2}\sqrt{\frac{D^{3}}{\lambda}}$ and $R_{\mathrm{Ray}}=\frac{2D^{2}}{\lambda}$,
respectively. Here, $\lambda$ denotes the carrier wavelength, and $D=(N-1)d$ represents the physical aperture of the ULA with inter-element spacing $d$.
At time $t$, the BS transmits a payload or pilot symbol $x_t$ using a beamformer $\mathbf{w}_t\in\mathbb{C}^{N\times 1}$, and the UE observes
\( y_t = \mathbf{h}_{t,ULA}^{\mathsf H}\mathbf{w}_t x_t + n_t, \)
where $n_t\sim\mathcal{CN}(0,\sigma^2)$ and $\mathbf{h}_{t,ULA}\in\mathbb{C}^{N}$ is the channel vector at $t$. 

\begin{figure}[t]
\centering
\includegraphics[width=\linewidth]{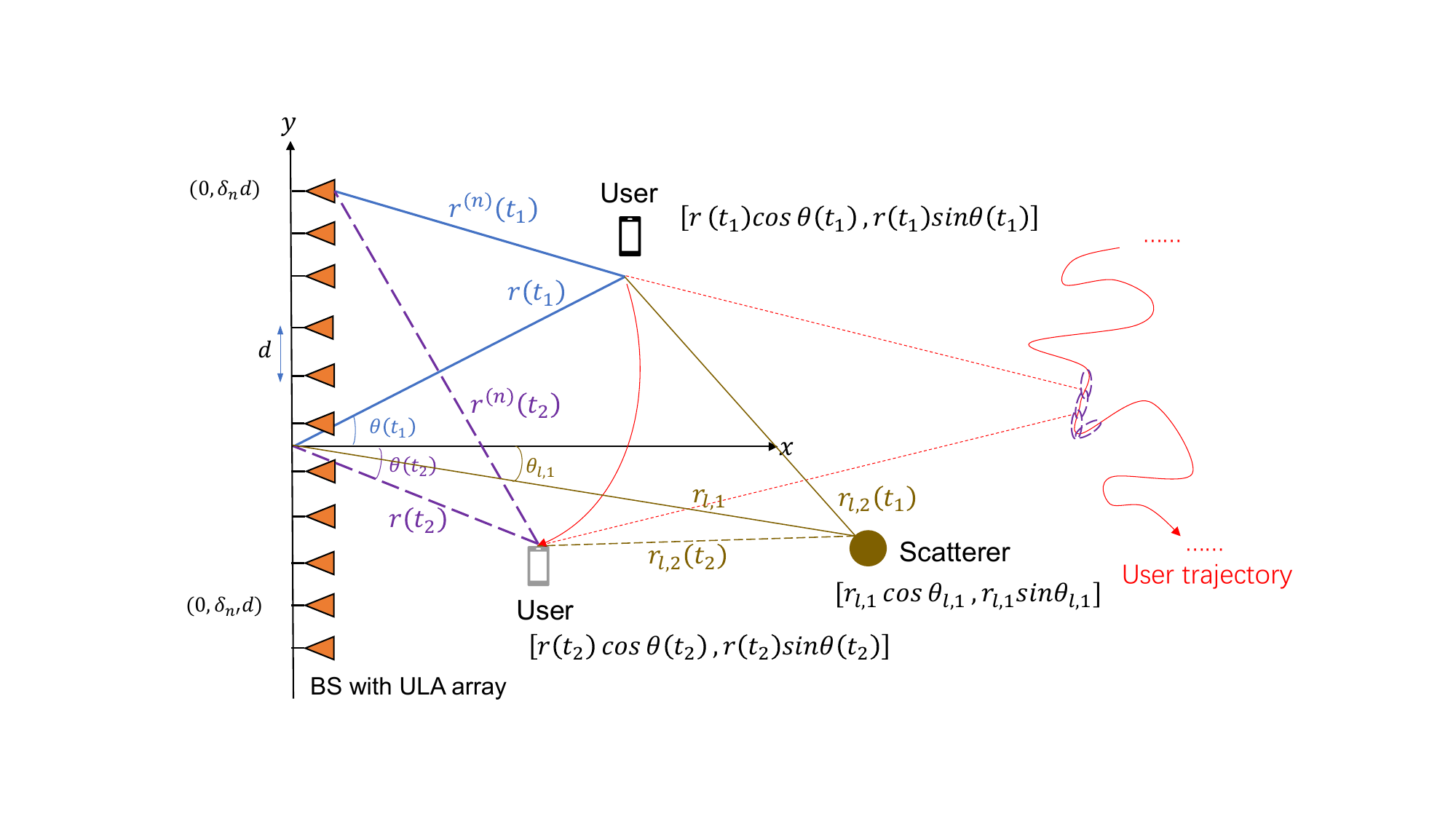}
\caption{ULA-based near-field channel with a mobile UE and a stationary scatterer.}
\label{channel}
\end{figure}

As depicted in Fig.~\ref{channel}, the BS employs a ULA aligned with the $y$-axis and centered at the origin $(0,0)$. The position of the $n$-th antenna element is given by $(0,\delta_n d)$, where 
$\delta_n \triangleq \frac{2n-N+1}{2}$, $n\in\mathcal{N}$, and $\mathcal{N}\triangleq\{0,1,\ldots,N-1\}$. 
The inter-element spacing is set to $d=\lambda/2$. 
The line-of-sight (LoS) channel vector $\mathbf{h}^{\mathrm{LoS}}_{t,\mathrm{ULA}}$ is determined by the UE's time-varying polar coordinates $[\theta(t),r(t)]$, where $r(t)$ and $\theta(t)$ denote its instantaneous range and angle relative to the array center. We define the near-field steering vector

\begin{IEEEeqnarray}{rCl}
\mathbf{b}_{ULA}\!\big(\theta(t),r(t)\big)
&=& \frac{1}{\sqrt{N}}
\Big[
e^{-j\frac{2\pi}{\lambda}\big(r^{(0)}(t)-r(t)\big)},\nonumber\\
&&\; \ldots,\ 
e^{-j\frac{2\pi}{\lambda}\big(r^{(N-1)}(t)-r(t)\big)}
\Big]^{\mathsf T}.
\end{IEEEeqnarray}
where the propagation distance from the $n$th element to the UE is
$r^{(n)}(t)=\sqrt{r^2(t)+\delta_n^2d^2-2r(t)\sin(\theta(t))\delta_nd}$.

Accordingly, the BS--UE LoS channel vector is
\( \mathbf{h}^{\mathrm{LoS}}_{t,\mathrm{ULA}} =g(t)e^{-j\frac{2\pi r(t)}{\lambda}}\mathbf b_{ULA}(\theta(t),r(t)), \)
where $g(t)\triangleq\lambda/(4\pi r(t))$ captures the free-space path gain \cite{sun2016pathloss}. Equivalently, its $n$th entry is
$h^{\mathrm{LoS}}_{t,\mathrm{ULA}}(n)=g(t)e^{-j2\pi r^{(n)}(t)/\lambda}/\sqrt N$.
Thus, the element-dependent Doppler shift follows directly from the distance derivative, \( f_{D,n}(t)=-\lambda^{-1}\mathrm d r^{(n)}(t)/\mathrm dt, \) and is not introduced as a separate state.
We assume that the scatterers remain stationary, while the scatterer--UE segment varies with the moving UE. The NLoS component is
\( \mathbf h^{\mathrm{NLoS}}_{t,\mathrm{ULA}} =\sum_{l=1}^{L-1}g_l(t)e^{-j\frac{2\pi[r_{l,1}+r_{l,2}(t)]}{\lambda}} \mathbf b_{ULA}(\theta_{l,1},r_{l,1}), \)
where the $l$-th scatterer is at $(\theta_{l,1},r_{l,1})$ relative to the array center,
$r_{l,2}(t)=\sqrt{r^2(t)+r_{l,1}^2-2r(t)r_{l,1}\cos[\theta(t)-\theta_{l,1}]}$, and
$g_l(t)=\lambda p_l/[4\pi (r_{l,1} + r_{l,2}(t))]$ includes path loss and reflection coefficient $p_l$. The complete channel is $\mathbf h_{t,\mathrm{ULA}}=\mathbf h^{\mathrm{LoS}}_{t,\mathrm{ULA}}+\mathbf h^{\mathrm{NLoS}}_{t,\mathrm{ULA}}$.
With unit-energy transmit symbols, the instantaneous pre-beamforming SNR at time $t$ is defined as $\mathrm{SNR}_t\triangleq\|\mathbf h_{t,\mathrm{ULA}}\|_2^2/(N\sigma^2)$, or equivalently, $\mathrm{SNR}_{t,\mathrm{dB}}\triangleq10\log_{10}\!\left(\|\mathbf h_{t,\mathrm{ULA}}\|_2^2/(N\sigma^2)\right)$. Here, $\sigma^2$ is the time-invariant noise variance, whereas $\mathrm{SNR}_t$ generally varies with $t$ as the channel power changes.
\subsection{Problem Formulation}\label{subsec:prob-form}

Because the LoS component is typically dominant in the considered
high-frequency setting \cite{NYU_NLOS}, the estimator models the time-varying LoS path, 
i.e., $\widehat{\mathbf h}_{t}=\widehat{\mathbf h}^{\mathrm{LoS}}_{t}$.
The objective is to track the time-varying polar coordinates $(\theta(t),r(t))$ and update the unit-norm transmit beamformer $\mathbf w_t$ accordingly. For a channel $\mathbf h_t$, define the normalized beamforming gain
\begin{equation}\label{eq:normalized_gain}
G_t(\mathbf w_t)\triangleq |\mathbf h_t^{\mathsf H}\mathbf w_t|^2 / \|\mathbf h_t\|_2^2,
\quad \|\mathbf w_t\|_2=1.
\end{equation}
The tracking policy seeks to maximize the expected time-average of \Cref{eq:normalized_gain}, subject to pilot overhead and receiver-feedback budget. 

\begin{figure}[!t]
\centering
\includegraphics[width=\linewidth]{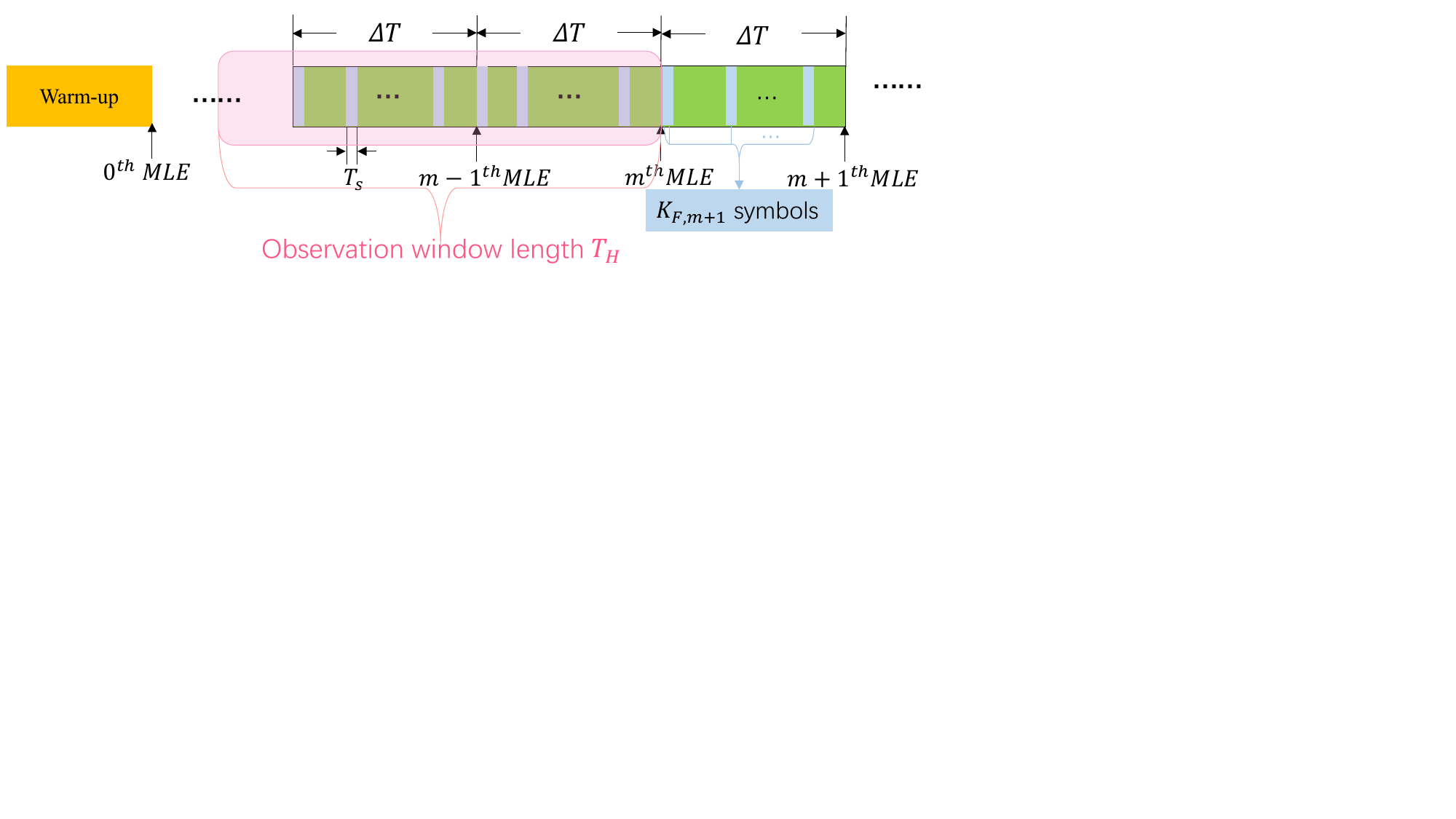} 
\caption{Proposed beam tracking protocol.}
\label{fig:tracking_protocol}
\end{figure}


At the end of the $m$-th update interval, let $t_m$ denote the physical update time and let $T_{H,m}$ denote the observation-window length. For a feedback observation acquired at physical time $t_u\in[t_m-T_{H,m},t_m]$, we use the window-relative, unnormalized time
\( t'_{u,m}\triangleq t_u-(t_m-T_{H,m})\in[0,T_{H,m}]. \) The same coordinate is used for
extrapolation during the $(m+1)$-th interval, for which
$t'_{u,m}>T_{H,m}$.
The coefficients are re-estimated in this local coordinate whenever the window changes. Assuming local trajectory smoothness, the angle and range at a generic local time $t'$ are represented by
\begin{equation}\label{eq:poly_apprx}
\widehat\theta_m(t')=\sum_{i=0}^{p_{\theta,m}}\alpha_{m,i}t'^i,
\qquad
\widehat r_m(t')=\sum_{i=0}^{p_{r,m}}\beta_{m,i}t'^i,
\end{equation}
where $p_{\theta,m}$ and $p_{r,m}$ are polynomial degrees,
$\boldsymbol\alpha_m=[\alpha_{m,0},\ldots,\alpha_{m,p_{\theta,m}}]^{\mathsf T}$,
$\boldsymbol\beta_m=[\beta_{m,0},\ldots,\beta_{m,p_{r,m}}]^{\mathsf T}$, and
\( \boldsymbol\eta_m\triangleq [\boldsymbol\alpha_m^{\mathsf T},\boldsymbol\beta_m^{\mathsf T}]^{\mathsf T} \in\mathbb R^{p_m}, \quad p_m=p_{\theta,m}+p_{r,m}+2. \)
Substituting \Cref{eq:poly_apprx} into the LoS model gives
\begin{equation}\label{eq:h_est}
\widehat{\mathbf h}^{\mathrm{LoS}}_{m,\mathrm{ULA}}(t';\boldsymbol\eta_m)
=\widehat g_m(t')e^{-j\frac{2\pi\widehat r_m(t')}{\lambda}}
\mathbf b_{ULA}\!\left(\widehat\theta_m(t'),\widehat r_m(t')\right).
\end{equation}
where $\widehat g_m(t')\triangleq\lambda/[4\pi\widehat r_m(t')]$. Below, the LoS superscript is omitted from estimated channels for brevity.

During warm-up, a fixed beamformer constructed from the UE's known initial position is transmitted for 256 consecutive symbols while the UE moves, and the resulting observations initialize the first MLE. Following the warm-up phase, the system enters the tracking protocol in \Cref{fig:tracking_protocol}. Let $t_0=0$ and
\( t_m\triangleq\sum_{j=1}^{m}\Delta T_j, \)
so that the \(m\)-th interval spans $[t_{m-1},t_m)$. With symbol duration $T_s$, we define the number of payload symbols transmitted during the \(m\)-th interval as \(K_m\triangleq\max\!\left\{2,\left\lfloor\frac{\Delta T_m}{T_s}\right\rfloor\right\}\), the cumulative number of symbols transmitted before the \(m\)-th interval as \(U_m^{\mathrm{start}}\triangleq\sum_{j=1}^{m-1}K_j\), and the corresponding global symbol index as \(u(m,k)\triangleq U_m^{\mathrm{start}}+k\) for $k\in\{1,\ldots,K_m\}$. Its physical time is $t_{m,k}\triangleq t_{m-1}+(k-1)T_s$, and we write $t_{u(m,k)}=t_{m,k}$.




The feedback ratio $\omega_m\in[0,1]$ determines the number
$K_{F,m}$ of feedback-designated payload symbols in interval $m$:
\( K_{F,m}\triangleq\min\!\left\{K_m, \max\!\left\{2,\left\lceil\omega_mK_m\right\rceil\right\}\right\}. \)
To cover the interval approximately uniformly, its feedback-position set is
\begin{equation}\label{eq:feedback_set}
\mathcal K_{F,m}\triangleq\left\{
1+\left\lfloor\frac{(i-1)(K_m-1)}{K_{F,m}-1}\right\rfloor
:i=1,\ldots,K_{F,m}\right\}.
\end{equation}
When $k\in\mathcal K_{F,m}$, the UE returns the received payload sample indexed by $u(m,k)$.

At the end of interval $m$, we define the symbol-index set used for MLE as $\mathcal U_m\triangleq \{u:t_m-T_{H,m}\le t_u\le t_m,\ u\text{ indexes a feedback sample}\}$ and the corresponding history as $ \mathcal H_m\triangleq \{(t_u,x_u,\mathbf w_u,y_u)\}_{u\in\mathcal U_m}. $
For $u\in\mathcal U_m$, evaluating \Cref{eq:h_est} at $t'=t'_{u,m}$ gives the noiseless predicted observation
\( \mu_u(\boldsymbol\eta) \triangleq \widehat{\mathbf h}_{m,\mathrm{ULA}}^{\mathsf H}(t'_{u,m};\boldsymbol\eta) \mathbf w_ux_u. \)
Under AWGN with variance $\sigma^2$, the conditional density is  $p\!\left(y_{u}\mid \boldsymbol{\eta}\right) =(\pi\sigma^2)^{-1}\exp\!\left( -\left|y_{u}-\mu_{u}(\boldsymbol{\eta})\right|^2/\sigma^2 \right). $

The joint likelihood is

\begin{equation}\label{eq:likelihood}
\begin{split}
\mathcal{L}_m(\boldsymbol{\eta})
&\triangleq \prod_{u\in\mathcal{U}_m} p\!\left(y_{u}\mid \boldsymbol{\eta}\right) \\
&= (\pi\sigma^2)^{-|\mathcal{U}_m|}  \exp\!\Bigg( -\frac{1}{\sigma^2}\sum_{u\in\mathcal{U}_m} 
\left|y_{u}-\mu_{u}(\boldsymbol{\eta})\right|^2 \Bigg).
\end{split}
\end{equation}

Maximizing the likelihood function in Eq. \eqref{eq:likelihood} is equivalent to minimizing the negative log-likelihood, resulting in the following nonlinear least-squares problem:
\begin{equation}\label{eq:mle_problem}
\hat{\boldsymbol{\eta}}_m
= \arg\min_{\boldsymbol{\eta}} \;
J^{ULA}_m(\boldsymbol{\eta}),
\end{equation}
where $J^{ULA}_m(\boldsymbol{\eta})
\triangleq \sum_{u\in\mathcal{U}_m} \left|y_{u}-\mu_{u}(\boldsymbol{\eta})\right|^2 $.


After collecting the feedback from the $m$-th interval, the BS estimates the local trajectory parameters by solving \Cref{eq:mle_problem}.
Because the channel depends nonlinearly on the polynomial coefficients, $J_m^{\mathrm{ULA}}(\boldsymbol\eta)$ is generally nonconvex. We solve \Cref{eq:mle_problem} with Adam \cite{kingma2015adam}, initialized by the preceding interval's estimate. 
Since the local-time origin changes
between updates, the preceding coefficients are first translated
to the current coordinate system.
Let
\(
\delta_m
\triangleq
\left(t_m-T_{H,m}\right)
-
\left(t_{m-1}-T_{H,m-1}\right).
\)
The translated initialization is
\(
\alpha_{m,\ell}^{\mathrm{init}}
=
\sum_{i=\ell}^{p_{\theta,m-1}}
\binom{i}{\ell}
\widehat{\alpha}_{m-1,i}
\delta_m^{\,i-\ell},
\)
\(
\ell=0,\ldots,p_{\theta,m},
\label{eq:angle_coefficient_translation}
\)
and
\(
\beta_{m,\ell}^{\mathrm{init}}
=
\sum_{i=\ell}^{p_{r,m-1}}
\binom{i}{\ell}
\widehat{\beta}_{m-1,i}
\delta_m^{\,i-\ell},
\)
\(
\ell=0,\ldots,p_{r,m}.
\)
The translated vector is zero padded when the degree increases
and truncated when the degree decreases.
The closed-form gradient construction for the ULA and UPA models is summarized in Appendix~\ref{app:grad}.

\section{MLE-Based Payload-Aided TS Beam Tracking}
\label{sec:method}
This section develops two payload-beam selection strategies based on the MLE in \Cref{eq:mle_problem}: pure exploitation and uncertainty-aware TS.

\subsection{Pure Exploitation}\label{exploit}
Pure exploitation always transmits toward the channel predicted by the current MLE. During the $(m+1)$-th interval, it uses the estimate obtained at the end of the $m$-th interval:
\begin{equation}\label{eq:pillot-2}
\mathbf w_u^{\mathrm{PE}}
=
\left.
\widehat{\mathbf h}_{m,\mathrm{ULA}}
(t'_{u,m};\widehat{\boldsymbol\eta}_m)
\middle/
\left\|
\widehat{\mathbf h}_{m,\mathrm{ULA}}
(t'_{u,m};\widehat{\boldsymbol\eta}_m)
\right\|_2
\right. .
\end{equation}
Although this strategy preserves immediate beamforming gain, it yields limited new information about other potential directions. Consequently, the absence of exploration can lead to track loss once the predicted path diverges from the actual UE trajectory.

\subsection{MLE-Based Constant-Parameter TS Beam Tracking}\label{sec:TS}
Thompson sampling (TS) is a Bayesian randomized decision rule that draws a model instance from the current posterior and selects the action that is optimal for that instance \cite{russo2018tutorial}. Here, TS maps uncertainty in a continuous angle--range trajectory directly to payload-carrying beams; this use does not rely on a cumulative-regret characterization.

At the end of the $m$-th interval, we approximate the trajectory-parameter posterior by
\begin{equation}\label{eq:post_dist}
p(\boldsymbol\eta_m\mid\mathcal H_m)
\approx\mathcal N(\widehat{\boldsymbol\eta}_m,\boldsymbol\Sigma_{\boldsymbol\eta,m}),
\end{equation}
where the covariance is the inverse observed Fisher information matrix (FIM),
\begin{equation}\label{eq:fim}
\boldsymbol\Sigma_{\boldsymbol\eta,m}
\triangleq
\mathcal I_{\boldsymbol\eta,m}^{-1}
=
\left.
\left(
\frac{1}{\sigma^2}
\nabla_{\boldsymbol\eta}^2 J_m^{\mathrm{ULA}}(\boldsymbol\eta)
\right)^{-1}
\right|_{\boldsymbol\eta=\widehat{\boldsymbol\eta}_m}.
\end{equation}
This distribution provides a local uncertainty approximation around the selected MLE solution. Under the regularity and
posterior-concentration assumptions stated in Appendix~\ref{app:laplace_fim}, the
approximation error admits an
$O(|\mathcal U_m|^{-1})$ KL-divergence bound.

During the $(m+1)$-th interval, this posterior is held fixed; independent samples are drawn only at its designated feedback positions. For $u=u(m+1,k)$ with $k\in\mathcal K_{F,m+1}$, draw $\widetilde{\boldsymbol\eta}_{m,k}\sim\mathcal N(\widehat{\boldsymbol\eta}_m,\boldsymbol\Sigma_{\boldsymbol\eta,m})$ and define
$\widetilde{\mathbf h}^{(m)}_u\triangleq\widehat{\mathbf h}_{m,\mathrm{ULA}}(t'_{u,m};\widetilde{\boldsymbol\eta}_{m,k})$. The TS beam is
{
\setlength{\abovedisplayskip}{2pt}
\setlength{\abovedisplayshortskip}{2pt}
\begin{equation}\label{eq:pilot-3}
\begin{aligned}
\mathbf w_u^{\mathrm{TS}}
&=\underset{\|\mathbf w\|_2=1}{\arg\max}\;
\left|\widetilde{\mathbf h}^{(m)\mathsf H}_u\mathbf w\right|^2
= \widetilde{\mathbf h}^{(m)}_u / \|\widetilde{\mathbf h}^{(m)}_u\|_2.
\end{aligned}
\end{equation}
}

The posterior covariance governs the exploration--exploitation tradeoff: sampling produces larger perturbations along high-variance eigendirections and thereby probes trajectory variations that are weakly identified by the current observations.
In the proposed protocol, feedback-designated payload symbols use \Cref{eq:pilot-3}; the other payload symbols use the pure-exploitation beam in \Cref{eq:pillot-2}. Combining this batched decision rule with fixed tracking parameters yields the MLE-based constant-parameter Thompson sampling (MLE-CTS) algorithm in \Cref{alg:ts_beam_tracking}. 

\begin{algorithm}[!t]
\caption{MLE-CTS Beam Tracking}
\label{alg:ts_beam_tracking}
\begin{algorithmic}[1]
\renewcommand{\algorithmicrequire}{\textbf{Input:}}
\renewcommand{\algorithmicensure}{\textbf{Output:}}

\REQUIRE Warm-up history $\mathcal H_0$, its MLE and covariance, number of intervals $M$, and fixed $T_H$, $\Delta T$, $\omega$, $p_\theta$, and $p_r$.
\FOR{$m\leftarrow0,1,\ldots,M-1$}
    \STATE Compute $K_{m+1}$ and $K_{F,m+1}$ from their definitions, and construct $\mathcal K_{F,m+1}$ using \eqref{eq:feedback_set}.
    \STATE Initialize the new feedback-tuple set $\mathcal D_{m+1}\leftarrow\varnothing$.
    \FOR{$k\leftarrow1,2,\ldots,K_{m+1}$}
        \STATE Set $u\leftarrow u(m+1,k)$ and $t'_{u,m}\leftarrow t_u-(t_m-T_H)$.
        \IF{$k\in\mathcal K_{F,m+1}$}
            \STATE Draw $\widetilde{\boldsymbol\eta}_{m,k}$ from \eqref{eq:post_dist} and construct $\mathbf w_u$ using \eqref{eq:pilot-3}.
        \ELSE
            \STATE Construct $\mathbf w_u$ using \eqref{eq:pillot-2}.
        \ENDIF
        \STATE Transmit payload symbol $x_u$; if $k\in\mathcal K_{F,m+1}$, receive the feedback sample $y_u$ and append $(t_u,x_u,\mathbf w_u,y_u)$ to $\mathcal D_{m+1}$.
    \ENDFOR
    \STATE $\mathcal H_{m+1}\leftarrow\mathrm{Truncate}(\mathcal H_m\cup\mathcal D_{m+1},T_H)$.
    \STATE Solve \eqref{eq:mle_problem} for $\widehat{\boldsymbol\eta}_{m+1}$ and update \Cref{eq:fim} for the next interval.
\ENDFOR

\end{algorithmic}
\end{algorithm}

\section{Adaptive Tracking-Parameter Adjustment}
\label{sec:adaptive}
The MLE-CTS scheme in \Cref{alg:ts_beam_tracking} uses fixed tracking parameters, including the observation-window length, trajectory-model dimension, tracking-update interval, and feedback ratio. Fixed parameters are effective under smooth mobility but can become mismatched after abrupt changes in the local trajectory. A longer observation window supplies more feedback observations but may span incompatible motion regimes after a sharp turn. A higher polynomial degree increases local modeling flexibility at the cost of estimation variance, poorer numerical conditioning, and computational complexity. Likewise, shorter update intervals and higher feedback ratios improve responsiveness and information acquisition at the cost of computation and feedback overhead.

The adaptive adjustment is performed at the end of each update
interval. During interval $m$, the payload beams are constructed using the trajectory estimate and posterior obtained at the end of interval $m-1$. After the feedback from interval $m$ has been collected, three observation-window and trajectory-model configurations are evaluated. The selected configuration produces
$\widehat{\boldsymbol\eta}_m^\star$
and
$\boldsymbol\Sigma_{\boldsymbol\eta,m}^\star$,
which are used to construct the payload beams during interval $m+1$. The same update also determines
$\Delta T_{m+1}$ and $\omega_{m+1}$.
The update interval $\Delta T_{m+1}$ determines how long the selected trajectory estimate is used before the next MLE update, whereas the feedback ratio $\omega_{m+1}$ determines the number and positions of the received payload samples returned by the UE. After the feedback from interval $m+1$ has been collected, the observation-window length and trajectory-model dimension are selected again.

To ensure that an enlarged observation-window candidate remains
available after the nominal window has been shortened, MLE-ATS
maintains an auxiliary feedback history covering the maximum
admissible window length. Define the maximal feedback-index set as
$\mathcal U_m^{\max}\triangleq
\{u:t_m-T_{H,\max}\le t_u\le t_m,\;
u\text{ indexes a feedback sample}\}$,
and the corresponding maximal history as
$\mathcal H_m^{\max}\triangleq
\{(t_u,x_u,\mathbf w_u,y_u)\}_{u\in\mathcal U_m^{\max}}$.
The nominal fitting history $\mathcal H_m$ is the subset of
$\mathcal H_m^{\max}$ corresponding to the current window
$T_{H,m}$. All candidate-specific datasets are extracted from
$\mathcal H_m^{\max}$.
Throughout this section, $p_m$ denotes the total number of
real-valued coefficients in the local angle--range trajectory model. Let $\mathcal P$ denote the admissible total coefficient dimensions; in the simulations, $\mathcal P=\{10,11,12,13\}$, $T_{H,\min}=1\,\mathrm{ms}$, and $T_{H,\max}=120\,\mathrm{ms}$.

\subsection{Estimation-Risk-Guided Observation-Window and
Model-Dimension Selection}
At the end of interval $m$, we consider three candidate observation
windows: shortened, nominal, and enlarged. Let
\( \mathcal C \triangleq \{-1,0,+1\} \)
denote the three candidate indices. For $c\in\mathcal C$, define
\begin{equation}
T_{H,m}^{c}
\triangleq
\min
\left\{
T_{H,\max},
\max
\left\{
T_{H,\min},
T_{H,m}+c\varepsilon_T
\right\}
\right\},
\label{eq:candidate_window}
\end{equation}
where $\varepsilon_T\geq0$ controls the candidate-window adjustment.
The associated feedback-index set is
\begin{equation}
\mathcal U_m^c
\triangleq
\left\{
u\in\mathcal U_m^{\max}:
t_m-T_{H,m}^c
\le t_u\le t_m
\right\}.
\label{eq:candidate_feedback_set}
\end{equation}

For $u\in\mathcal U_m^c$, define the candidate-specific local time
\( t_{u,m}^{\prime,c}\triangleq t_u-(t_m-T_{H,m}^{c}) \)
and the corresponding predicted observation
\( \mu_u^c(\boldsymbol\eta)\triangleq
\widehat{\mathbf h}_{m,\mathrm{ULA}}^{\mathsf H}
(t_{u,m}^{\prime,c};\boldsymbol\eta)\mathbf w_ux_u. \)
For notational simplicity, the candidate-specific estimation risk
$R_{\mathcal U_m^c}$ below is evaluated using $\mu_u^c$.

For a generic feedback-index set $\mathcal S$, let
$\mu_u^\circ$ denote the noise-free received signal at index $u$:
\( y_u = \mu_u^\circ+n_u, \qquad n_u \sim \mathcal{CN}(0,\sigma^2). \)
For an arbitrary trajectory-parameter estimate
$\widehat{\boldsymbol\eta}$, define the estimation risk over
$\mathcal S$ as
\begin{equation}
R_{\mathcal S}
\left(
\widehat{\boldsymbol\eta}
\right)
\triangleq
\frac{1}{|\mathcal S|}
\sum_{u\in\mathcal S}
\left|
\mu_u
\left(
\widehat{\boldsymbol\eta}
\right)
-
\mu_u^\circ
\right|^2.
\label{eq:risk}
\end{equation}
This estimation risk measures the error in predicting the noise-free received
values along the transmitted beams. Because
$\{\mu_u^\circ\}_{u\in\mathcal S}$ are unknown at the BS,
\Cref{eq:risk} cannot be evaluated directly during tracking.

For a feedback-index set $\mathcal S$, let
$\widehat{\boldsymbol\eta}_{\mathcal S}$
denote the MLE obtained from the feedback observations indexed by
$\mathcal S$. The following lemma characterizes how the
estimation risk scales with the number of model
parameters and feedback observations.

\begin{lemma}
\label{lem:chi_square_risk}
Suppose that the local trajectory model is correctly specified over
$\mathcal S$ and contains $p$ real-valued parameters. Under the
smoothness, identifiability, and large-sample conditions stated in
Appendix~\ref{app:chi-square}, for fixed $p$,
\begin{equation}
\frac{
2|\mathcal S|
R_{\mathcal S}
\left(
\widehat{\boldsymbol\eta}_{\mathcal S}
\right)
}{
\sigma^2
}
\Rightarrow
\chi_p^2
\qquad
\text{as }
|\mathcal S|\rightarrow\infty,
\label{eq:risk_chi_square}
\end{equation}
where $\chi_p^2$ denotes the chi-square distribution with $p$
degrees of freedom. Under the additional moment condition stated in
Appendix~\ref{app:chi-square},
\begin{equation}
\mathbb E
\left[
R_{\mathcal S}
\left(
\widehat{\boldsymbol\eta}_{\mathcal S}
\right)
\right]
=
\frac{
\sigma^2p
}{
2|\mathcal S|
}
+
o
\left(
\frac{1}{|\mathcal S|}
\right).
\label{eq:risk_mean}
\end{equation}
\end{lemma}

\emph{Proof:}
See Appendix~\ref{app:chi-square}.

\textit{Remark 1:}
Lemma~1 evaluates the estimation risk on the same design points
used to fit the MLE. It therefore characterizes in-window
estimation-induced error, rather than extrapolation error over the
next update interval. We use \Cref{eq:risk_mean} only as a
model-complexity budget to coordinate the candidate model dimension with the number of available feedback observations. Candidate window lengths are compared using the
corrected estimation-risk score below.
The factor $2$ in \Cref{eq:risk_chi_square} arises because the real
and imaginary parts of
$n_u\sim\mathcal{CN}(0,\sigma^2)$
each have variance $\sigma^2/2$. 

For candidate $c$, define the estimation-risk-admissible model-dimension set
\begin{equation}
\mathcal P_m^c
\triangleq
\left\{
p\in\mathcal P:
\frac{\sigma^2p}{2|\mathcal U_m^c|}
\leq R_{\mathrm{dim}},
\quad
p<|\mathcal U_m^c|
\right\}.
\label{eq:pm_selection}
\end{equation}
If $\mathcal P_m^c=\varnothing$, no value is assigned to $p_m^c$,
and candidate $c$ is declared dimension-inadmissible.
Otherwise, set
$p_m^c\triangleq\max\mathcal P_m^c$
and map $p_m^c$ to its associated angle and range polynomial degrees.
For each dimension-admissible candidate, the trajectory coefficients
are estimated as
\begin{equation}
\widehat{\boldsymbol\eta}_m^{c}
=
\underset{
\boldsymbol\eta\in\Theta(p_m^{c})
}{
\arg\min}
\;
\sum_{u\in\mathcal U_m^{c}}
\left|
y_u-\mu_u^c(\boldsymbol\eta)
\right|^2,
\label{eq:candidate_mle}
\end{equation}
where $\Theta(p_m^{c})$ denotes the feasible parameter set associated
with model dimension $p_m^{c}$. For each dimension-admissible candidate for which
\Cref{eq:candidate_mle} is successfully solved, let
$\mathcal I_{\boldsymbol\eta,m}^{c}$
denote the candidate-specific observed FIM obtained by applying
\Cref{eq:fim} to the objective in
\Cref{eq:candidate_mle}.
Define the feasible-candidate set as $\mathcal C_m^{\mathrm{feas}} \triangleq \bigl\{c\in\mathcal C:\; \mathcal P_m^c\neq\varnothing,\; \widehat{\boldsymbol\eta}_m^c \text{ is successfully obtained}, \mathcal I_{\boldsymbol\eta,m}^{c} \text{ is positive definite} \bigr\}.$

To obtain an observable candidate-selection criterion, we relate the estimation risk to the received-signal residual. For an arbitrary estimate
$\widetilde{\boldsymbol\eta}$, define
\( e_u(\widetilde{\boldsymbol\eta})\triangleq y_u-\mu_u(\widetilde{\boldsymbol\eta}). \)

\begin{lemma}
\label{lem:residual_risk}
Let $\widetilde{\boldsymbol\eta}$ be fixed when the expectation over the noise samples $\{n_u:u\in\mathcal S\}$ is taken. Then
\begin{equation}
\mathbb E\left[\frac{1}{|\mathcal S|}\sum_{u\in\mathcal S}\left|e_u(\widetilde{\boldsymbol\eta})\right|^2\right]
=R_{\mathcal S}(\widetilde{\boldsymbol\eta})+\sigma^2.
\label{eq:residual_risk_decomp}
\end{equation}
\end{lemma}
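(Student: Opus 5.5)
The plan is to expand the squared residual after substituting the observation model. For each $u\in\mathcal S$, write $y_u=\mu_u^\circ+n_u$, so that
\begin{equation*}
e_u(\widetilde{\boldsymbol\eta})
=\bigl(\mu_u^\circ-\mu_u(\widetilde{\boldsymbol\eta})\bigr)+n_u
\triangleq a_u+n_u .
\end{equation*}
Here $a_u$ is deterministic with respect to the noise. This is exactly where the hypothesis is used: $\widetilde{\boldsymbol\eta}$ is held fixed under the expectation, so $a_u$ does not depend on $\{n_u\}$. The noise-free value $\mu_u^\circ$ and the transmitted quantities $\mathbf w_u,x_u$ are also treated as given.

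Next, expand the squared modulus:
\begin{equation*}
|e_u|^2=|a_u|^2+2\,\mathrm{Re}\{a_u^{*}n_u\}+|n_u|^2 .
\end{equation*}
Take expectations term by term. The cross term vanishes because $n_u\sim\mathcal{CN}(0,\sigma^2)$ is zero-mean and $a_u$ is nonrandom. The last term gives $\mathbb E|n_u|^2=\sigma^2$. Hence
\begin{equation*}
\mathbb E|e_u(\widetilde{\boldsymbol\eta})|^2=|\mu_u(\widetilde{\boldsymbol\eta})-\mu_u^\circ|^2+\sigma^2 .
\end{equation*}

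Finally, average over $u\in\mathcal S$ using linearity of expectation. By definition \Cref{eq:risk}, the first part is $R_{\mathcal S}(\widetilde{\boldsymbol\eta})$, and the constant $\sigma^2$ averages to itself. This yields \Cref{eq:residual_risk_decomp}. Independence across $u$ is not needed; only the zero mean and variance of each $n_u$ are used.

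There is no real obstacle in this argument. The one point to flag is that the identity fails if $\widetilde{\boldsymbol\eta}$ is the MLE fitted on the same noise samples. In that case $a_u$ is correlated with $n_u$, and the cross term contributes a negative bias of order $\sigma^2p/|\mathcal S|$. I would note this as motivation for the corrected estimation-risk score that combines this lemma with \Cref{lem:chi_square_risk}.
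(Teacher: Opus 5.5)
Your proof is correct and matches the paper's argument essentially verbatim. The paper writes $e_u(\widetilde{\boldsymbol\eta})=n_u-\Delta\mu_u$, notes that $\Delta\mu_u$ does not depend on the noise for fixed $\widetilde{\boldsymbol\eta}$, uses $\mathbb E[n_u]=0$ and $\mathbb E[|n_u|^2]=\sigma^2$ to eliminate the cross term, and then averages over $u\in\mathcal S$; your remark on the same-sample bias likewise matches the paper's subsequent $\sigma^2 p/|\mathcal S|$ optimism correction.
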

\begin{proof}
Define $\Delta\mu_u\triangleq\mu_u(\widetilde{\boldsymbol\eta})-\mu_u^\circ$.
Since $y_u=\mu_u^\circ+n_u$, the residual satisfies
\(e_u(\widetilde{\boldsymbol\eta})=n_u-\Delta\mu_u\).
For fixed $\widetilde{\boldsymbol\eta}$, $\Delta\mu_u$ does not depend on $n_u$. Using $\mathbb E[n_u]=0$ and $\mathbb E[|n_u|^2]=\sigma^2$ gives
\(\mathbb E[|e_u(\widetilde{\boldsymbol\eta})|^2]=|\Delta\mu_u|^2+\sigma^2\).
Averaging over $u\in\mathcal S$ proves \Cref{eq:residual_risk_decomp}.
\end{proof}

Based on Lemma~\ref{lem:residual_risk}, define the observable noise-corrected estimation-risk score
\begin{equation}
\widehat R_{\mathcal S}(\widetilde{\boldsymbol\eta})
\triangleq
\frac{1}{|\mathcal S|}
\sum_{u\in\mathcal S}
\left|e_u(\widetilde{\boldsymbol\eta})\right|^2
-\sigma^2.
\label{eq:risk_eval}
\end{equation}

When $\widetilde{\boldsymbol\eta}$ is estimated from the same
observations used to evaluate the residual, the score
$\widehat R_{\mathcal S}(\widetilde{\boldsymbol\eta})$ is optimistically
biased because the fitted model absorbs part of the receiver
noise.  Appendix~C shows that, under the same
conditions as Lemma~1, the
same-sample first-order expansion gives
\begin{equation}
\mathbb E
\left[
\widehat R_{\mathcal S}
\left(
\widehat{\boldsymbol\eta}_{\mathcal S}
\right)
\right]
=
\mathbb E
\left[
R_{\mathcal S}
\left(
\widehat{\boldsymbol\eta}_{\mathcal S}
\right)
\right]
-
\frac{\sigma^2 p}{|\mathcal S|}
+
o
\left(
\frac{1}{|\mathcal S|}
\right).
\label{eq:same_sample_optimism}
\end{equation}
For each $c\in\mathcal C_m^{\mathrm{feas}}$, define
\(
e_u^c
\left(
\widetilde{\boldsymbol\eta}
\right)
\triangleq
y_u
-
\mu_u^c
\left(
\widetilde{\boldsymbol\eta}
\right).
\)
We use the following first-order optimism-corrected in-sample
estimation-risk score:
\begin{equation}
\widehat R_m^c
\triangleq
\frac{1}{
\left|
\mathcal U_m^c
\right|
}
\sum_{u\in\mathcal U_m^c}
\left|
e_u^c
\left(
\widehat{\boldsymbol\eta}_m^c
\right)
\right|^2
-
\sigma^2
+
\frac{
\sigma^2 p_m^c
}{
\left|
\mathcal U_m^c
\right|
}.
\label{eq:candidate_residual_score}
\end{equation}
Provided that
$\mathcal C_m^{\mathrm{feas}}\neq\varnothing$,
the selected candidate is
\begin{equation}
c_m^\star
=
\arg\min_{
c\in\mathcal C_m^{\mathrm{feas}}
}
\widehat R_m^c,
\label{eq:candidate_selection}
\end{equation}
If $\mathcal C_m^{\mathrm{feas}}=\varnothing$, no value is assigned to
$c_m^\star$, and the hold-last-valid-state branch in
\Cref{alg:adaptive_tracking} is invoked.


When $\mathcal C_m^{\mathrm{feas}}\neq\varnothing$, the selected
parameters are
\begin{equation}
T_{H,m}^{\star}=T_{H,m}^{c_m^\star},\quad
p_m^\star=p_m^{c_m^\star},\quad
\widehat{\boldsymbol\eta}_m^\star
=\widehat{\boldsymbol\eta}_m^{c_m^\star}.
\label{eq:selected_configuration}
\end{equation}
In this case, the posterior covariance
$\boldsymbol\Sigma_{\boldsymbol\eta,m}^\star$
is constructed from the selected estimate
$\widehat{\boldsymbol\eta}_m^\star$
using the observed-information matrix in \Cref{eq:fim}.
The estimate
$\widehat{\boldsymbol\eta}_m^\star$
and covariance
$\boldsymbol\Sigma_{\boldsymbol\eta,m}^\star$
are used to construct payload beams during the $(m+1)$-th interval. The
selected observation-window length and model dimension become the
nominal values used at the next update:
$T_{H,m+1} = T_{H,m}^\star$, $ p_{m+1} = p_m^\star. $
We map $p_{m+1}$ to the polynomial degrees as
$p_{\theta,m+1}=\lfloor(p_{m+1}-2)/2\rfloor$ and
$p_{r,m+1}=p_{m+1}-2-p_{\theta,m+1}$.

\subsection{Residual-Based Update-Interval and Feedback-Ratio
Adjustment}
\label{subsec:adaptive_interval_feedback}
When $\mathcal C_m^{\mathrm{feas}}\neq\varnothing$, the feedback
observations collected during interval $m$ are used to adjust the
update interval and feedback ratio for interval $m+1$. In this case, let
\(
\mathcal S_m^\star
\triangleq
\mathcal U_m^{c_m^\star}
\)
denote the fitting set associated with the selected candidate.
Let $\Delta\mathcal U_m$ denote the feedback indices collected
during interval $m$.
For any feedback index $u$, define
\(
e_{u,m}^\star
\triangleq
y_u
-
\mu_u^{c_m^\star}
\left(
\widehat{\boldsymbol\eta}_m^\star
\right)
\)
and
\(
\mathbf g_{u,m}^\star
\triangleq
\left.
\nabla_{\boldsymbol\eta}
\mu_u^{c_m^\star}
\left(
\boldsymbol\eta
\right)
\right|_{
\boldsymbol\eta=
\widehat{\boldsymbol\eta}_m^\star
}.
\)
For an index set $\mathcal B$, define the local design-information matrix
\begin{equation}
\mathbf I_m(\mathcal B)
\triangleq
\frac{2}{\sigma^2}
\sum_{u\in\mathcal B}
\operatorname{Re}
\left\{
\mathbf g_{u,m}^\star
(\mathbf g_{u,m}^{\star})^{\mathsf H}
\right\}.
\label{eq:subset_information}
\end{equation}
For any nonempty index set $\mathcal A$, define the effective
same-sample degrees of freedom
\begin{equation}
d_m(\mathcal A)
\triangleq
\operatorname{tr}
\left[
\mathbf I_m
\left(
\mathcal A\cap \mathcal S_m^\star
\right)
\mathbf I_m
\left(
\mathcal S_m^\star
\right)^{-1}
\right].
\label{eq:subset_effective_df}
\end{equation}
The corresponding first-order optimism-corrected estimation-risk score is
\begin{equation}
\widehat R_m^{\mathrm{ctrl}}(\mathcal A)
\triangleq
\frac{1}{|\mathcal A|}
\sum_{u\in\mathcal A}
\left|
e_{u,m}^\star
\right|^2
-
\sigma^2
+
\frac{\sigma^2d_m(\mathcal A)}{|\mathcal A|}.
\label{eq:controller_corrected_score}
\end{equation}

The intersection in
Eq.~\eqref{eq:subset_effective_df}
accounts for the fact that only observations used in fitting
$\widehat{\boldsymbol\eta}_m^\star$
produce same-sample optimism. In particular, if
$\mathcal A=\mathcal S_m^\star$, then
$d_m(\mathcal A)=p_m^\star$ and
Eq.~\eqref{eq:controller_corrected_score}
reduces to the correction used in
Eq.~\eqref{eq:candidate_residual_score}.

Let $\Delta\mathcal U_m^{\mathrm{head}}$ and $\Delta\mathcal U_m^{\mathrm{tail}}$ denote the feedback indices whose physical times lie in $[t_{m-1},(t_{m-1}+t_m)/2)$ and $[(t_{m-1}+t_m)/2,t_m)$, respectively. Define
\begin{equation}
R_m^{\mathrm{head}}
\triangleq
\widehat R_m^{\mathrm{ctrl}}
\left(
\Delta\mathcal U_m^{\mathrm{head}}
\right),
\qquad
R_m^{\mathrm{tail}}
\triangleq
\widehat R_m^{\mathrm{ctrl}}
\left(
\Delta\mathcal U_m^{\mathrm{tail}}
\right),
\label{eq:head_tail_risk}
\end{equation}
and
\begin{equation}
g_{\Delta,m}
\triangleq
R_m^{\mathrm{tail}}
-
R_m^{\mathrm{head}}.
\label{eq:risk_growth_indicator}
\end{equation}
A positive $g_{\Delta,m}$ indicates that the selected local model
has a larger estimation-risk score over the second half of interval
$m$, motivating a shorter update interval for interval $m+1$.

The corrected whole-interval estimation-risk score is
\begin{equation}
R_m^{\mathrm{whole}}
\triangleq
\widehat R_m^{\mathrm{ctrl}}
\left(
\Delta\mathcal U_m
\right),
\label{eq:whole_interval_risk}
\end{equation}
and the feedback-ratio indicator is
\begin{equation}
g_{\omega,m}
\triangleq
R_m^{\mathrm{whole}}
-
R_{\mathrm{fb}}.
\label{eq:feedback_risk_indicator}
\end{equation}
When $g_{\omega,m}>0$, the estimation-risk score exceeds the target
and the feedback ratio is increased for the $(m+1)$-th interval; when
$g_{\omega,m}<0$, the feedback ratio may be reduced.

The estimation-risk levels $R_{\mathrm{dim}}$ and $R_{\mathrm{fb}}$ have different roles. The former determines the candidate trajectory-model dimension through Lemma~\ref{lem:chi_square_risk}, whereas the latter is the feedback-control target for the observable estimation-risk score.
Let
$\rho_\Delta>0$
and
$\rho_\omega>0$
denote the two update gains. The tracking-update interval is adjusted
as
\begin{equation}
\begin{aligned}
\Delta T_{m+1}
&=
\min
\left\{
\Delta T_{\max},
\right.\\[-1mm]
&\hspace{8mm}\left.
\max
\left\{
\Delta T_{\min},
\Delta T_m
\exp
\left(
-\rho_\Delta g_{\Delta,m}
\right)
\right\}
\right\},
\end{aligned}
\label{eq:update_interval_rule}
\end{equation}
and the feedback ratio is adjusted as
\begin{equation}
\omega_{m+1}
=
\min
\left\{
\omega_{\max},
\max
\left\{
\omega_{\min},
\omega_m
\exp
\left(
\rho_\omega g_{\omega,m}
\right)
\right\}
\right\}.
\label{eq:feedback_ratio_rule}
\end{equation}

In the simulations, the allowed ranges are
$
\Delta T_{\min}
=
1\,\mathrm{ms},
$
$
\Delta T_{\max}
=
40\,\mathrm{ms},
$
$
\omega_{\min}
=
0.5,
$
$
\omega_{\max}
=
1.
$ The number of payload symbols in interval $m+1$ is
\begin{equation}
K_{m+1}
=
\max
\left\{
2,
\left\lfloor
\frac{\Delta T_{m+1}}{T_s}
\right\rfloor
\right\},
\qquad
\Delta T_{m+1}
\leftarrow
K_{m+1}T_s.
\label{eq:integer_interval_length}
\end{equation}
The number of feedback observations is
\begin{equation}
K_{F,m+1}
=
\min
\left\{
K_{m+1},
\max
\left\{
2,
\left\lceil
\omega_{m+1}K_{m+1}
\right\rceil
\right\}
\right\}.
\label{eq:integer_feedback_number}
\end{equation}
The realized feedback ratio can then be set to
\( \omega_{m+1} \leftarrow \frac{ K_{F,m+1} }{ K_{m+1} }. \)
The feedback-position set for interval $m+1$ is constructed using
\Cref{eq:feedback_set}. The estimate
$\widehat{\boldsymbol\eta}_m^\star$
and posterior covariance
$\boldsymbol\Sigma_{\boldsymbol\eta,m}^\star$
determine the payload-beam directions during interval $m+1$.
The feedback ratio
$\omega_{m+1}$
determines the TS probing and receiver-feedback positions, whereas
$\Delta T_{m+1}$
determines the duration of interval $m+1$ and the time of the next
MLE update. The observation-window length and trajectory-model
dimension are selected again when that next update is performed. 
After the feedback from the $(m+1)$-th interval has been collected, the
maximal history is updated as
\begin{equation}
\mathcal H_{m+1}^{\max}
\leftarrow
\operatorname{Truncate}
\left(
\mathcal H_m^{\max}
\cup
\mathcal D_{m+1},
T_{H,\max}
\right).
\label{eq:maximal_history_update}
\end{equation}
The history is not truncated to the selected nominal window
$T_{H,m+1}$, because the enlarged candidate at the next update may
require older observations. When no feasible candidate is available, the current adaptive update
is skipped. The most recent valid trajectory posterior is retained
together with its original local-time origin, and the current nominal
values $T_{H,m}$, $p_m$, $\Delta T_m$, and $\omega_m$ are carried over
to interval $m+1$.
The complete procedure is summarized in
\Cref{alg:adaptive_tracking}.

\begin{algorithm}[!t]
\caption{Adaptive Tracking-Parameter Adjustment for MLE-ATS}
\label{alg:adaptive_tracking}
\begin{algorithmic}[1]

\renewcommand{\algorithmicrequire}{\textbf{Input:}}
\renewcommand{\algorithmicensure}{\textbf{Output:}}

\REQUIRE
Most recent valid posterior state;
maximal history $\mathcal H_m^{\max}$ with index set
$\mathcal U_m^{\max}$;
feedback indices $\Delta\mathcal U_m$ collected during interval $m$;
current nominal values
$T_{H,m}$, $p_m$, $\Delta T_m$, and $\omega_m$;
noise variance $\sigma^2$;
window increment $\varepsilon_T$;
admissible model dimensions $\mathcal P$;
    estimation-risk levels $R_{\mathrm{dim}}$ and $R_{\mathrm{fb}}$;
update gains $\rho_\Delta,\rho_\omega>0$.

\ENSURE
$\widehat{\boldsymbol\eta}_m^\star$,
$\boldsymbol\Sigma_{\boldsymbol\eta,m}^\star$,
$T_{H,m+1}$,
$p_{m+1}$,
$\Delta T_{m+1}$,
$\omega_{m+1}$,
$K_{m+1}$, and
$K_{F,m+1}$.

\FOR{$c\in\mathcal C=\{-1,0,+1\}$}

    \STATE Construct $T_{H,m}^{c}$, $\mathcal U_m^c$, and
    $t_{u,m}^{\prime,c}$ using
    \Cref{eq:candidate_window,eq:candidate_feedback_set},
    and construct $\mathcal P_m^c$ using \Cref{eq:pm_selection}.

    \IF{$\mathcal P_m^c\neq\varnothing$}

        \STATE Set $p_m^c\leftarrow\max\mathcal P_m^c$ and attempt to
        compute $\widehat{\boldsymbol\eta}_m^c$ using
        \Cref{eq:candidate_mle}.

        \IF{the MLE is successfully obtained and its observed FIM is positive definite}

            \STATE Compute $\widehat R_m^c$ using
            \Cref{eq:candidate_residual_score}.

        \ENDIF

    \ENDIF

\ENDFOR

\IF{$\mathcal C_m^{\mathrm{feas}}=\varnothing$}

    \STATE Retain the most recent valid posterior state; set
    $T_{H,m+1}\leftarrow T_{H,m}$,
    $p_{m+1}\leftarrow p_m$,
    $\Delta T_{m+1}\leftarrow\Delta T_m$, and
    $\omega_{m+1}\leftarrow\omega_m$.

\ELSE

    \STATE Select $c_m^\star$ using \Cref{eq:candidate_selection},
    and set $T_{H,m}^\star$, $p_m^\star$, and
    $\widehat{\boldsymbol\eta}_m^\star$
    using \Cref{eq:selected_configuration}.

    \STATE Construct
    $\boldsymbol\Sigma_{\boldsymbol\eta,m}^\star$
    using \Cref{eq:fim}; set
    $T_{H,m+1}=T_{H,m}^\star$ and $p_{m+1}=p_m^\star$.

    \STATE Split $\Delta\mathcal U_m$ into
    $\Delta\mathcal U_m^{\mathrm{head}}$ and
    $\Delta\mathcal U_m^{\mathrm{tail}}$, and compute
    $R_m^{\mathrm{head}}$, $R_m^{\mathrm{tail}}$, and
    $R_m^{\mathrm{whole}}$ using
    \Cref{eq:head_tail_risk,eq:whole_interval_risk}.

    \STATE Compute $g_{\Delta,m}$ and $g_{\omega,m}$ and update
    $\Delta T_{m+1}$ and $\omega_{m+1}$ using
    \Cref{eq:risk_growth_indicator,eq:feedback_risk_indicator,eq:update_interval_rule,eq:feedback_ratio_rule}.

\ENDIF

\STATE Compute $K_{m+1}$ and $K_{F,m+1}$ using
\Cref{eq:integer_interval_length,eq:integer_feedback_number},
and construct $\mathcal K_{F,m+1}$ using \Cref{eq:feedback_set}.

\end{algorithmic}
\end{algorithm}

Combining the payload-beam selection in
\Cref{alg:ts_beam_tracking}
with the adaptive tracking-parameter adjustment in
\Cref{alg:adaptive_tracking}
gives the
\emph{MLE-based adaptive-parameter Thompson sampling}
scheme, abbreviated as \emph{MLE-ATS}.

\section{Simulation Results}\label{sec:sim}

The simulation uses a 256-element ULA at $f_c=73$ GHz. The total duration is $T=4$ s and the symbol duration is $T_s=1/(30\,\mathrm{kHz})$. The channel contains one time-varying LoS path and one path reflected by a stationary scatterer \cite{NYU_NLOS}. The UE remains in the near-field region, with $r\in[8,80]$ m and $\theta\in[-\pi/3,\pi/3]$, and follows a time-varying velocity profile. For each target initial SNR: $\rho_{\mathrm{dB}}\equiv\mathrm{SNR}_{0,\mathrm{dB}}$, we define $\rho\triangleq10^{\rho_{\mathrm{dB}}/10}=\mathrm{SNR}_0$, set $\sigma^2=\|\mathbf h_{0,\mathrm{ULA}}\|_2^2/(N\rho)$, and hold $\sigma^2$ fixed throughout the run. For the MLVS baseline \cite{tvt2024tracking}, we use the numerical-example setting $\sigma_{\mathrm{RCS}}=-23$ dBsm.
 
\subsection{Performance Metrics}
Let $\chi_u=1$ when the $u$th symbol carries payload and $\chi_u=0$ for a dedicated pilot. With $G_u$ defined by \Cref{eq:normalized_gain}, the payload-accounted normalized beamforming gain is $\widetilde G_u\triangleq\chi_uG_u$; hence, dedicated pilots contribute zero payload gain. Over an evaluation set $\mathcal V$, the mean and variance are $\overline G=|\mathcal V|^{-1}\sum_{u\in\mathcal V}\widetilde G_u$ and $V_G=|\mathcal V|^{-1}\sum_{u\in\mathcal V}(\widetilde G_u-\overline G)^2$, respectively.
We set the normalized-gain threshold to $\gamma=0.75$ and count a transmitted symbol as effective only if it carries payload and satisfies $G_u\geq\gamma$. Equivalently, the effective-symbol ratio over all transmitted symbols in $\mathcal V$ is
$ P_{\mathrm{eff}}(\gamma)=\frac{1}{|\mathcal V|} \sum_{u\in\mathcal V}\mathbbm 1\{\widetilde G_u \geq \gamma\}$. 



\subsection{Evaluation of MLE-CTS Beam Tracking}
We first use a parabolic trajectory to evaluate MLE-CTS against pure exploitation and the representative baselines. The UE's mean and maximum speeds are $3.11$ m/s and $4.71$ m/s, respectively. The polynomial model uses $p_\theta=3$ and $p_r=6$ with $T_H=66.60$ ms. 
MLE-ATS is also evaluated under the parabolic trajectory.
When MLE-ATS selects the total model dimension $p_{m+1}$,
we keep the angular polynomial degree fixed at
$p_{\theta,m+1}=3$ and assign the remaining coefficients to
the range model, i.e.,
$p_{r,m+1}=p_{m+1}-2-p_{\theta,m+1}$.
Because MLE-ATS and MLE-CTS achieve nearly identical tracking
performance in this smooth-mobility setting,
\Cref{fig:tracking_performance,fig:smooth_vs_snr} reports only the MLE-CTS results
for clarity. Unless otherwise stated, SNR is $20$ dB. 

\begin{figure}[!t]
    \centering
    \subfloat[Effect of $\Delta T$ on MLE-CTS and pure exploitation]{
        \begin{minipage}{0.46\linewidth}
            \centering
            \includegraphics[width=\linewidth]{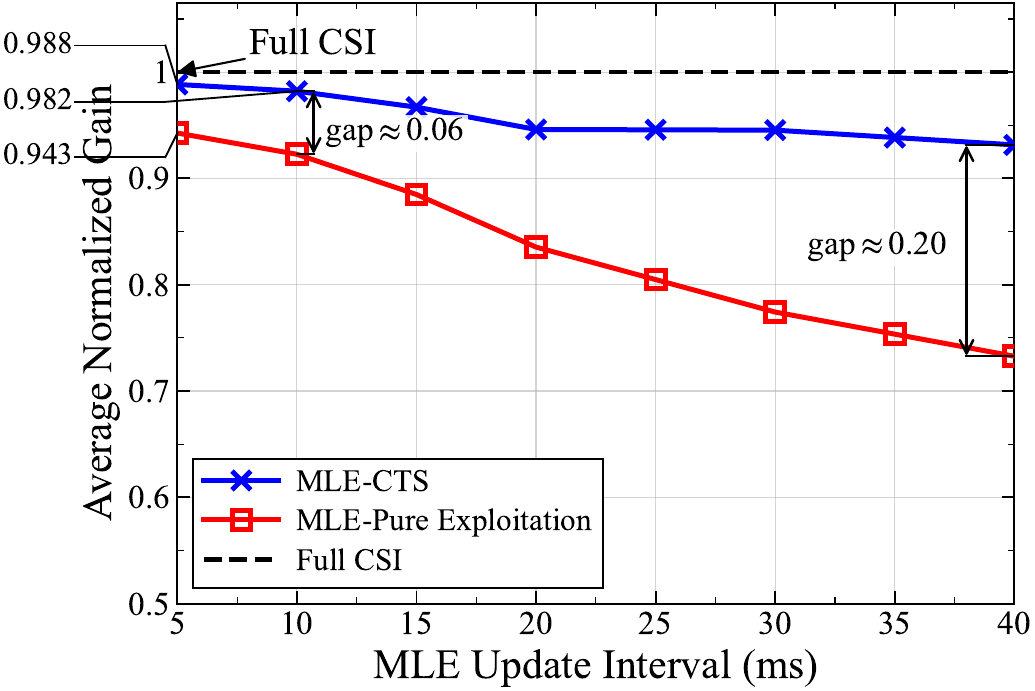}
            \vspace{-5mm}
            \label{fig:normalized beamforming gain}
        \end{minipage}
    }
    \hspace{0.01\linewidth}
    \subfloat[Effect of feedback ratio on MLE-CTS ($\Delta T=10$ ms)]{
        \begin{minipage}{0.46\linewidth}
            \centering
            \includegraphics[width=\linewidth]{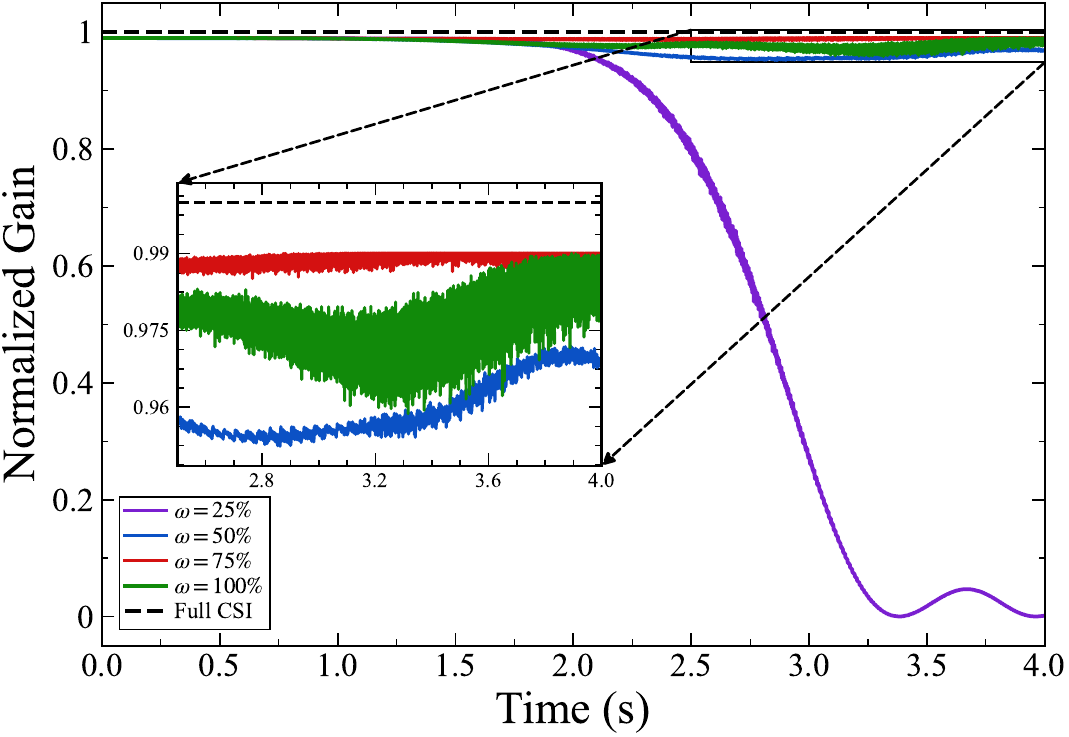}
            \vspace{-5mm}
            \label{fig:feedback}
        \end{minipage}
    }
    \vspace{2mm}
    \caption{Tracking performance comparison under varying MLE update intervals and feedback symbol ratios.}
    \label{fig:tracking_performance}
\end{figure}


\Cref{fig:normalized beamforming gain} compares the proposed MLE-CTS tracking scheme with pure exploitation under different MLE update intervals $\Delta T$, using the mean payload-accounted normalized beamforming gain over $T$ as the performance metric. As $\Delta T$ increases, the tracking performance of both schemes deteriorates because the beamformer relies on increasingly outdated motion estimates between consecutive updates. 
Nevertheless, the proposed MLE-CTS consistently outperforms pure exploitation over the entire trajectory for the same $\Delta T$.
When the update interval is small, such as $5$ ms, MLE-CTS achieves a mean normalized beamforming gain of $0.99$, which approaches the theoretical upper bound because the estimated trajectory parameters, $\hat{\theta}(t)$ and $\hat{r}(t)$, are frequently refined. Conversely, pure exploitation degrades as tracking progresses because it cannot acquire new channel information through alternative probing beams. As $\Delta T$ increases, the performance gap between the two schemes becomes more pronounced. Specifically, when $\Delta T = 10$ ms, the difference in mean beamforming gain is about $0.06$, whereas it grows to approximately $0.20$ as $\Delta T$ increases to $40$ ms. The widening gap indicates that posterior-guided exploration makes MLE-CTS more robust to accumulated prediction errors than pure exploitation.

The proposed payload-aided method requires closed-loop receiver feedback. To evaluate this overhead, we vary the feedback-symbol ratio $\omega$ within each update interval and set $\Delta T=10$ ms.
This interval was selected as a representative setting because, as shown in \Cref{fig:normalized beamforming gain}, the performance gap between $\Delta T = 5$ ms and $\Delta T = 10$ ms is negligible (approximately $0.01$).
As illustrated in \Cref{fig:feedback}, the proposed method remains effective even when $\omega$ is reduced: with only $50\%$ feedback symbols, the achieved normalized beamforming gain is close to that of the $100\%$-feedback strategy. 
Interestingly, the $75\%$ feedback strategy slightly outperforms the $100\%$ strategy. This occurs because the $100\%$ strategy continuously allocates symbols to exploration. In contrast, the $75\%$ strategy reserves a portion of the frame for pure exploitation. Once the channel estimate is sufficiently accurate, this greedy allocation avoids unnecessary perturbations from TS exploration, leading to more stable beamforming. Conversely, at a $25\%$ feedback ratio, the algorithm tracks the UE for only the first $2$ s because too few samples are available for accurate estimation.

With fixed $\Delta T$, $T_H$, and $p$, we compare MLE-CTS under the parabolic trajectory with three baselines: coherence-time-driven local sweeping \cite{UPA_NF}, maximum-likelihood velocity-sensing (MLVS) assisted tracking \cite{tvt2024tracking}, and extended Kalman filter (EKF) tracking. The coherence-time scheme derives an effective coherence time from a tolerable gain loss and predicted velocity, then performs a local near-field sweep around the predicted position. MLVS estimates radial and transverse velocities from echo observations and predicts the next angle--range state. The EKF models range, angle, and velocity as a nonlinear state, applies a constant-velocity prediction, and updates that prediction using feedback-pilot observations.







\begin{figure*}[!t]
    \centering
    \subfloat[Mean normalized beamforming gain.
    \label{fig:avg_gain_smooth}]{
        \includegraphics[width=0.30\textwidth]
        {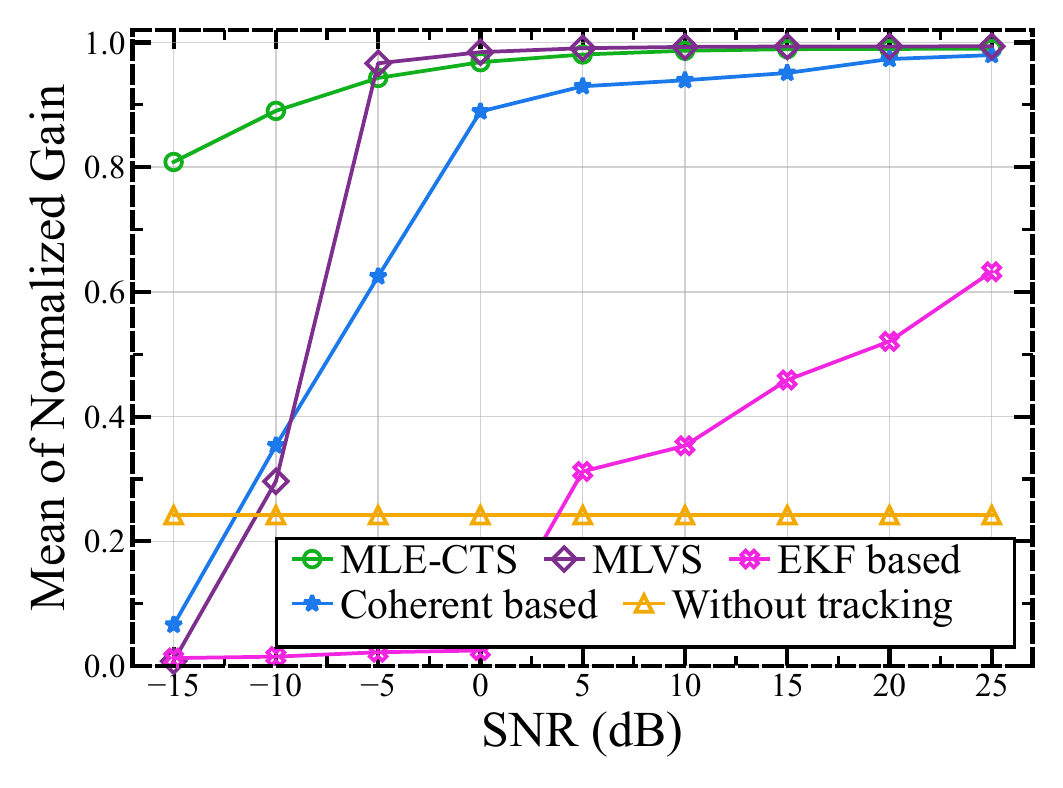}
    }%
    \hfill
    \subfloat[Payload-accounted normalized-gain variance.
    \label{fig:var_smooth}]{
        \includegraphics[width=0.31\textwidth]
        {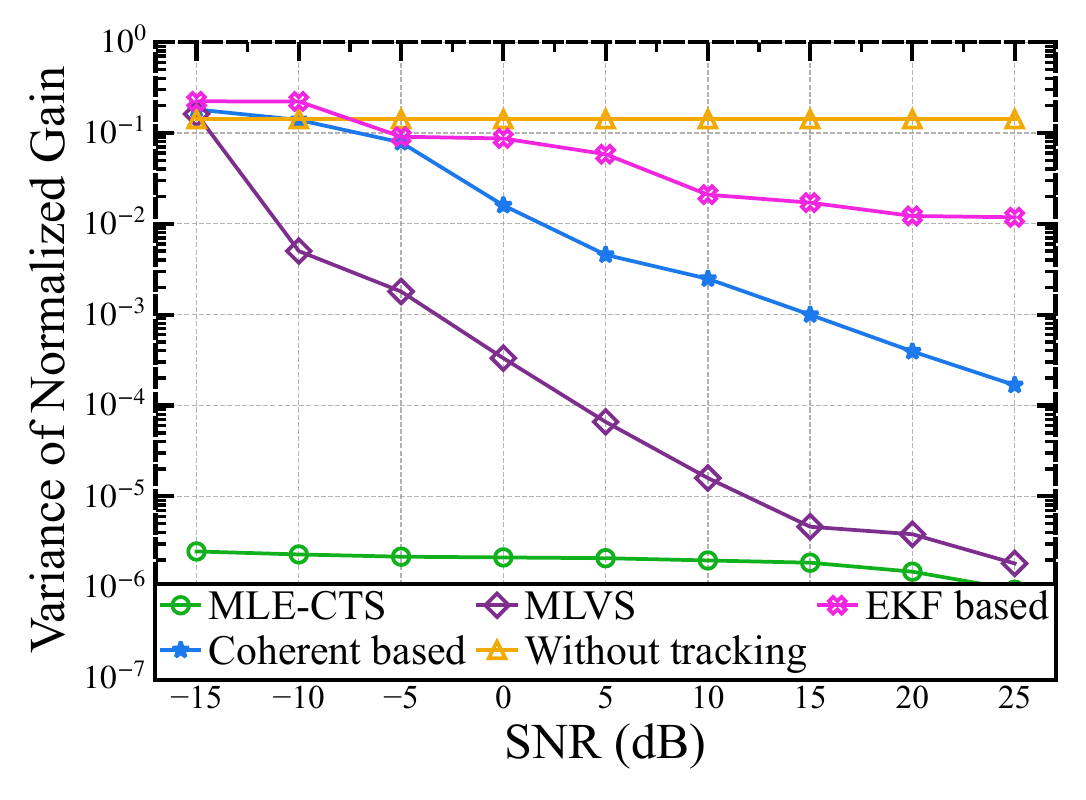}
    }%
    \hfill
    \subfloat[Effective-symbol ratio.
    \label{fig:outage_smooth}]{
        \includegraphics[width=0.30\textwidth]
        {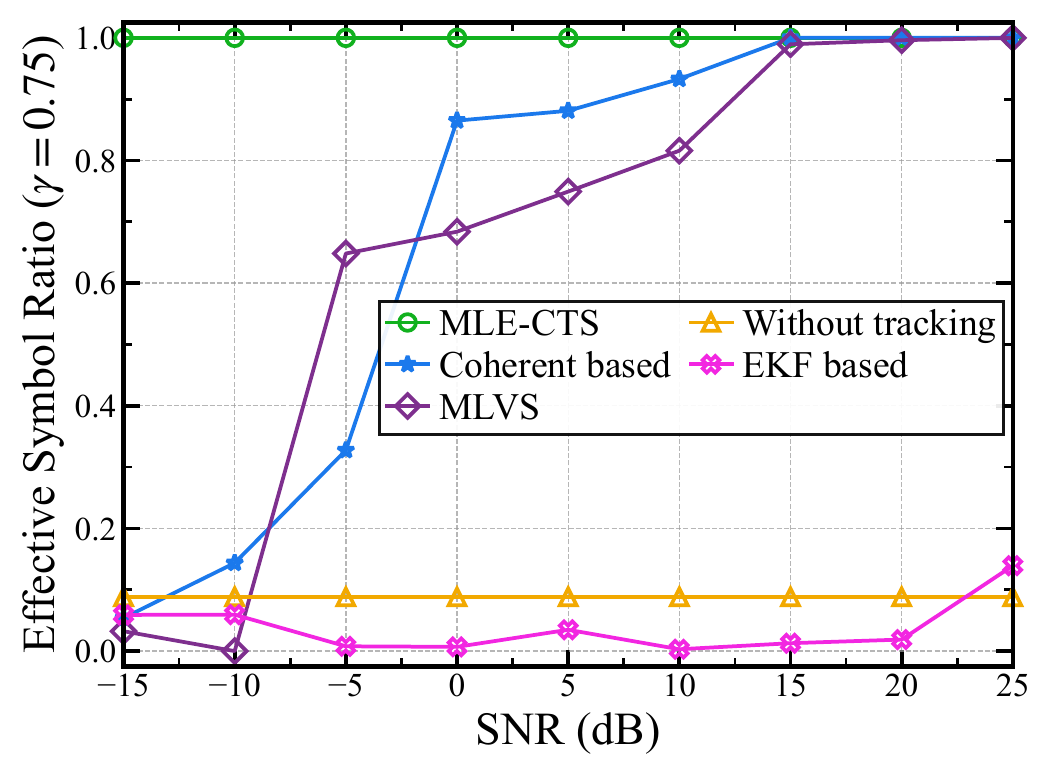}
    }
    \caption{Payload-accounted tracking performance versus SNR under
    the parabolic trajectory.}
    \label{fig:smooth_vs_snr}
\end{figure*}


\begin{figure}[t]
    \centering
    \includegraphics[width=0.92\linewidth]{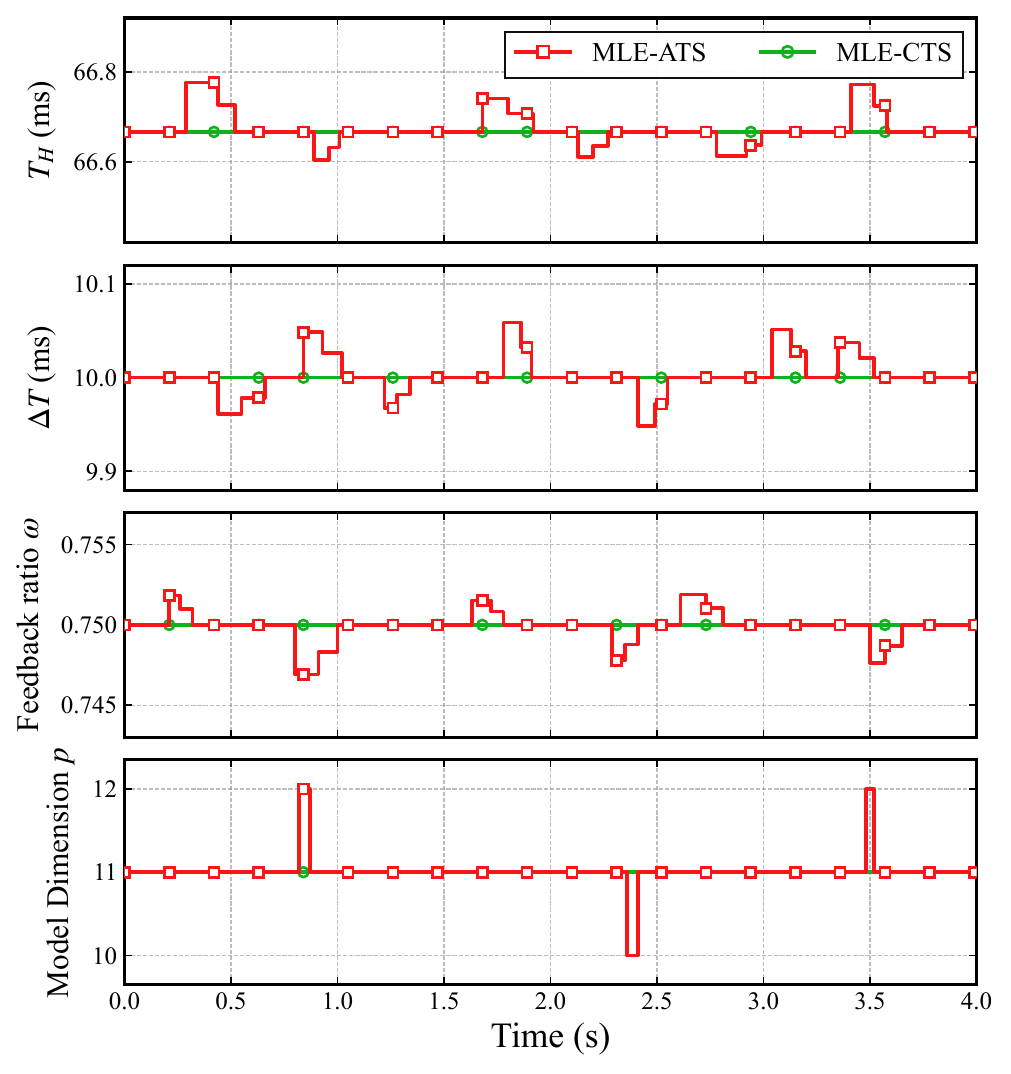}
    \caption{Tracking parameters of MLE-ATS and MLE-CTS at $20$ dB under the parabolic trajectory.}
    \label{fig:parameters_smooth}
\end{figure}

\Cref{fig:avg_gain_smooth} shows the mean normalized beamforming gain versus SNR under the parabolic trajectory. MLE-CTS yields the highest mean gain at the two lowest SNRs and approaches unity as SNR increases. At $-15$ dB, it achieves approximately $0.81$, whereas the coherence-time, MLVS, and EKF baselines are close to zero and the no-tracking benchmark remains near $0.24$. MLVS marginally exceeds MLE-CTS from $-5$ dB onward after its sensing estimate stabilizes, while the coherence-time method gradually closes the gap as SNR increases.

\Cref{fig:var_smooth} shows that MLE-CTS has the lowest payload-accounted normalized-gain variance, remaining on the order of $10^{-6}$ across the SNR sweep. The variance of MLVS decreases from approximately $10^{-1}$ at low SNR to the $10^{-6}$ range at high SNR. The coherence-time method exhibits a similar SNR-dependent reduction but retains a larger variance because local beam sweeping introduces zero payload-gain entries, while the EKF and no-tracking baselines retain substantially larger values.

\Cref{fig:outage_smooth} shows that MLE-CTS maintains an effective-symbol ratio of $1.00$ throughout the SNR sweep, consistent with its low payload-accounted normalized-gain variance. At $0$ dB, MLVS and the coherence-time baseline achieve $0.85$ and $0.72$, respectively; at $25$ dB, they reach $0.90$ and $0.88$. The EKF and no-tracking baselines remain below $0.15$. Thus, MLE-CTS provides reliable per-symbol gain even where the competing methods exhibit comparable mean gain only at higher SNR.

\Cref{fig:parameters_smooth} compares the four parameters $T_H$, $\Delta T$, $\omega$, and $p$ for MLE-ATS and MLE-CTS under the parabolic trajectory. Starting from the same configuration, MLE-ATS introduces only minor variations, indicating that substantial adaptation is unnecessary when the mobility remains smooth.



\subsection{Evaluation of MLE-ATS Beam Tracking}

\begin{figure}[t]
    \centering
    \includegraphics[width=0.8\linewidth]{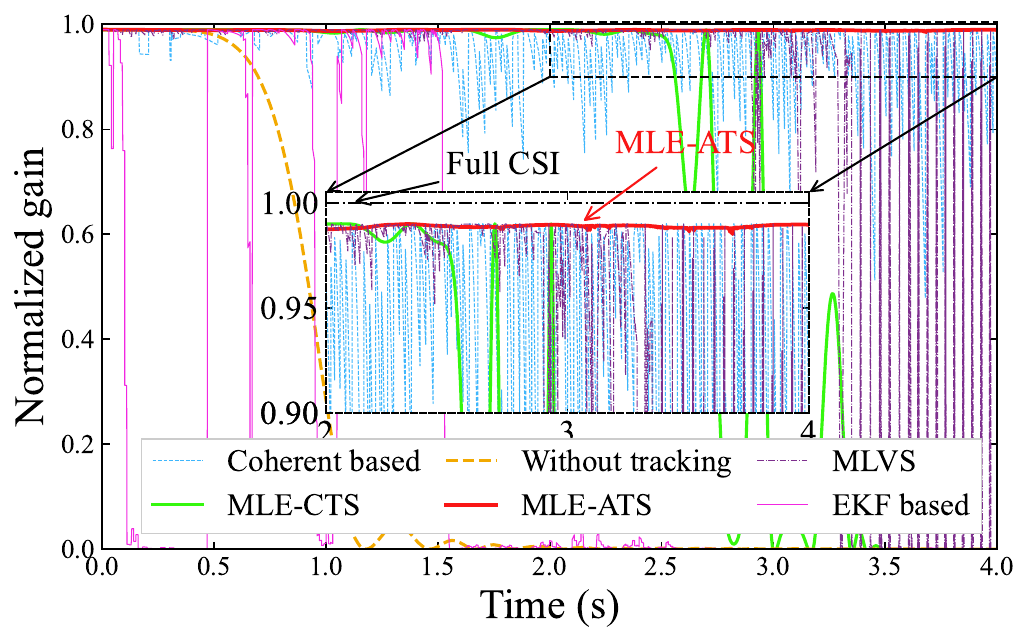}
    \caption{Normalized beamforming gain over time under the sharp-turn trajectory.}
    \label{fig:gain_sharp}
\end{figure}

\begin{figure}[t]
    \centering
    \includegraphics[width=0.8\linewidth]{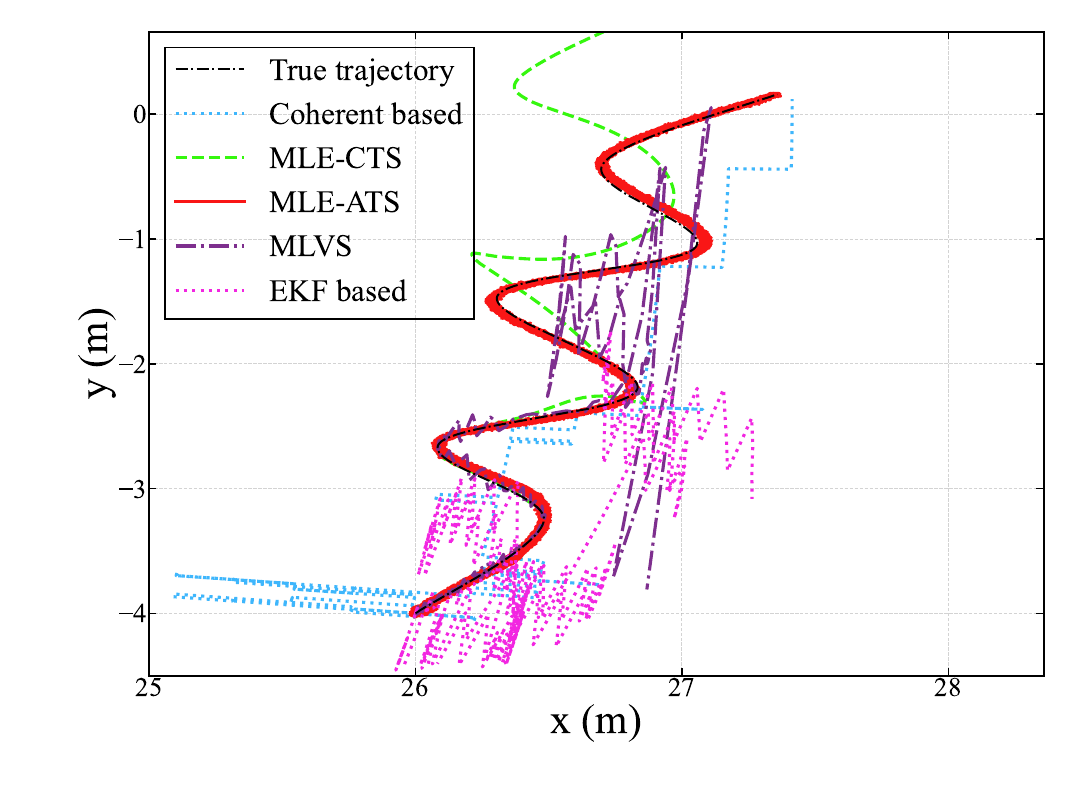}
    \caption{Trajectory-tracking comparison under the sharp-turn trajectory.}
    \label{fig:traj_sharp}
\end{figure}

\begin{figure*}[!t]
    \centering
    \subfloat[Mean normalized beamforming gain.\label{fig:avg_gain_sharp}]{
        \includegraphics[width=0.30\textwidth]
        {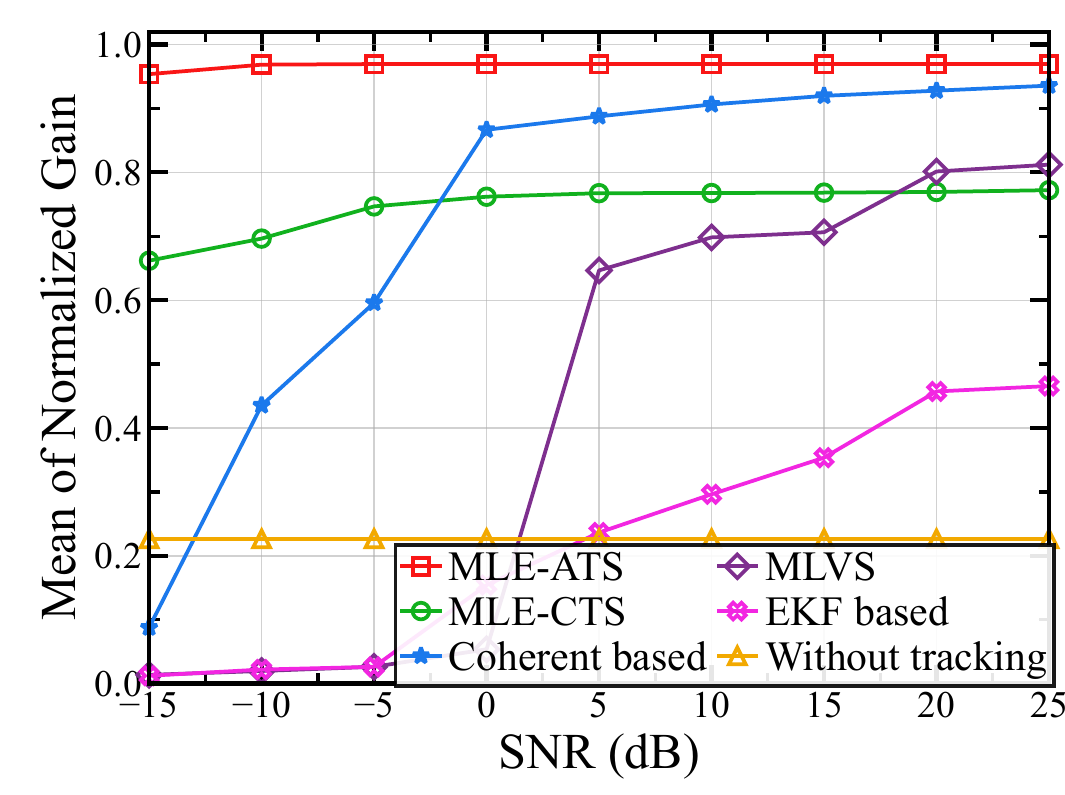}
    }%
    \hfill
    \subfloat[Payload-accounted normalized-gain variance.\label{fig:sharp_var_snr}]{
        \includegraphics[width=0.31\textwidth]
        {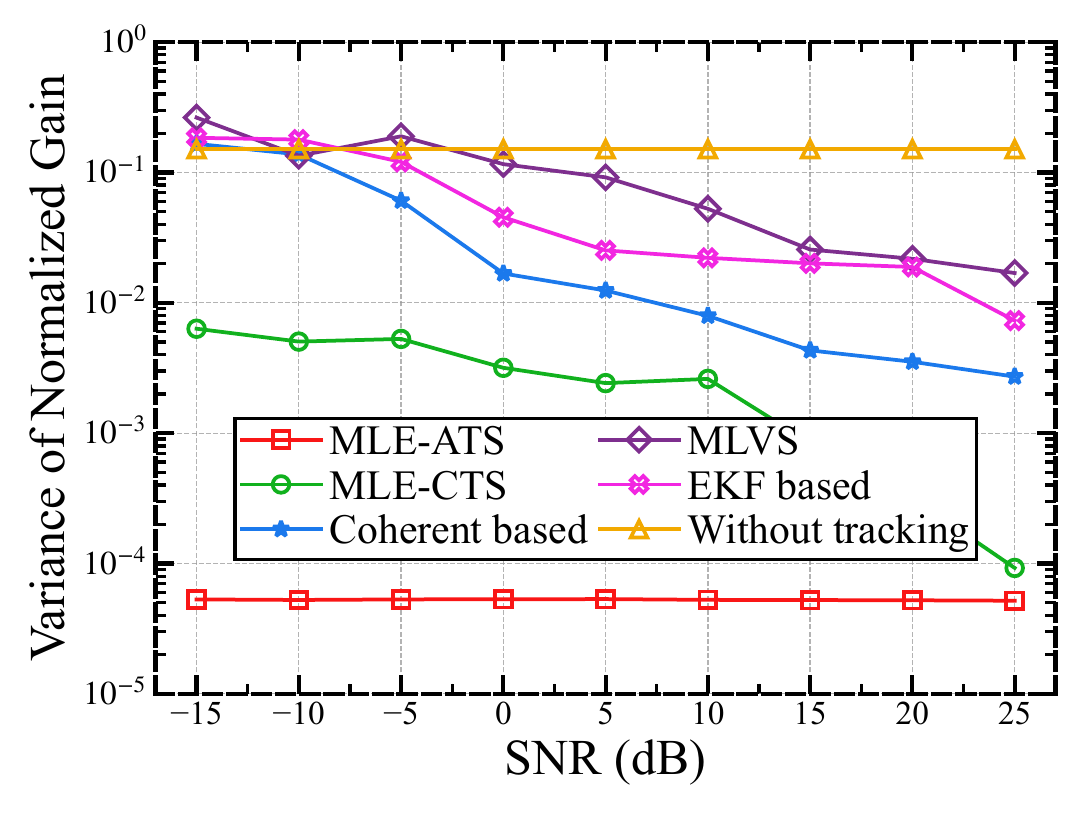}
    }%
    \hfill
    \subfloat[Effective-symbol ratio.\label{fig:outage_sharp}]{
        \includegraphics[width=0.30\textwidth]
        {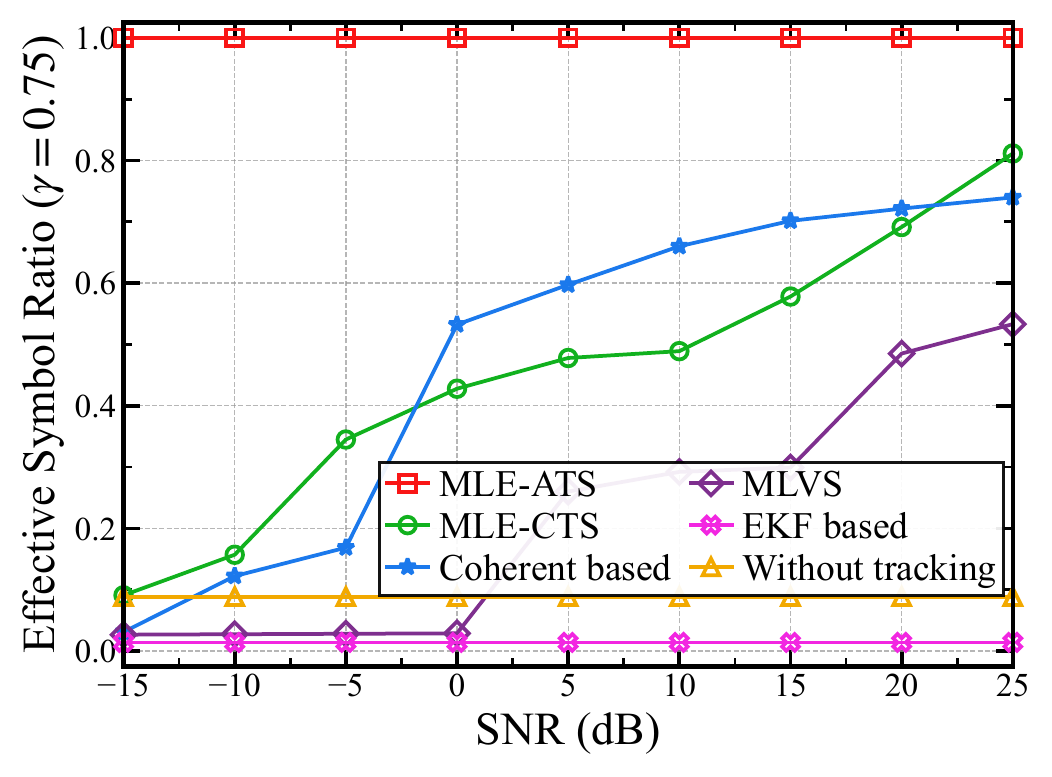}
    }
    \caption{Payload-accounted tracking performance versus SNR under
    the sharp-turn trajectory.}
    \label{fig:sharp_vs_snr}
\end{figure*}

\begin{figure}[t]
    \centering
    \includegraphics[width=0.92\linewidth]{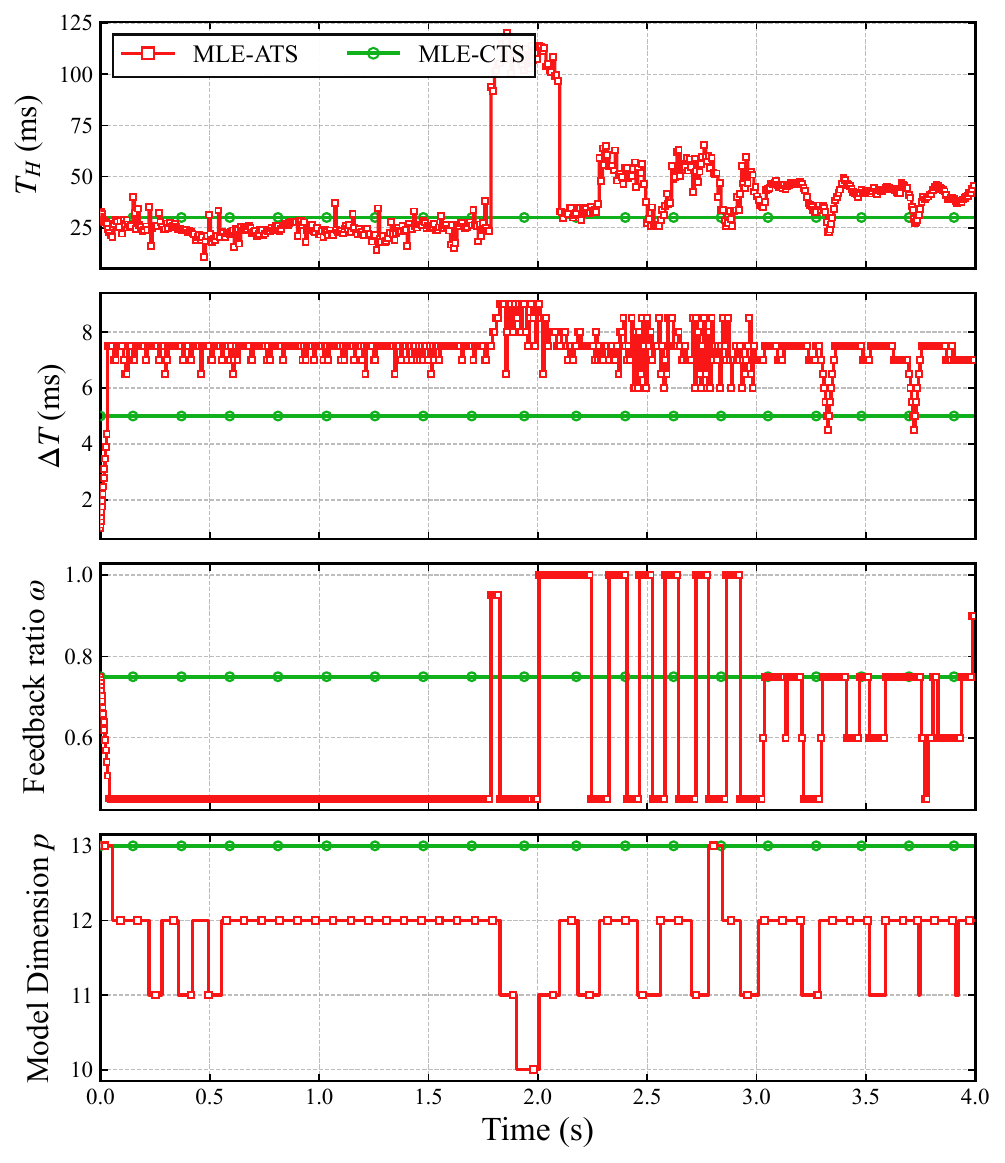}
    \caption{Tracking parameters of MLE-ATS and MLE-CTS at $20$ dB under the sharp-turn trajectory.}
    \label{fig:parameters_sharp}
\end{figure}


To evaluate robustness under highly nonstationary mobility, we use a trajectory with several sharp turns. The UE's mean and maximum speeds are $2.00$ m/s and $3.20$ m/s, respectively, and the environment contains one stationary scatterer. We set $R_{\mathrm{dim}}=R_{\mathrm{fb}}=5\times10^{-13}$, $\varepsilon_T=6.67$ ms (200 symbol durations), $\rho_\Delta=3.5\times10^{12}$, and $\rho_\omega=2.7\times10^{12}$ at an SNR of $20$ dB. \Cref{fig:gain_sharp,fig:traj_sharp} show that constant-parameter MLE-CTS follows the locally smooth segments but deviates around the sharp turns, where a fixed observation window cannot maintain an accurate local fit. MLE-ATS remains close to the ground-truth trajectory and the full-CSI benchmark by coordinating the observation-window length, model dimension, update interval, and feedback ratio. The coherence-time, MLVS, and EKF baselines yield delayed or scattered estimates under abrupt motion or model mismatch, while the no-tracking benchmark deteriorates once the initial beam becomes misaligned with the UE.

\Cref{fig:avg_gain_sharp} shows that MLE-ATS provides the highest mean gain among the practical schemes and remains close to the full-CSI benchmark, at approximately $0.97$ throughout the SNR sweep. Constant-parameter MLE-CTS increases from about $0.66$ to $0.77$. At $-5$ dB, MLE-ATS achieves approximately $0.97$, compared with $0.75$ for MLE-CTS and $0.60$ for the coherence-time method; MLVS and EKF remain close to zero. At high SNR, the coherence-time scheme benefits from more reliable local updates and approaches MLE-ATS, whereas MLVS remains lower because its fixed coherent processing interval responds more slowly to abrupt changes.

\Cref{fig:sharp_var_snr} shows that MLE-ATS has the smallest payload-accounted normalized-gain variance, on the order of $10^{-5}$, despite the nonstationary trajectory. The variance of MLE-CTS decreases from the $10^{-3}$ range toward $10^{-4}$ as SNR increases but remains above that of MLE-ATS. The MLVS, EKF, and coherence-time baselines exhibit substantially larger low-SNR variances because of local-sweeping interruptions or sensitivity to velocity-state and motion-model errors.

\Cref{fig:outage_sharp} further distinguishes the schemes in terms of per-symbol reliability. MLE-ATS maintains an effective-symbol ratio of $1.00$ across the full SNR range. At $0$ dB, the coherence-time and MLE-CTS schemes achieve $0.53$ and $0.43$, respectively; at $20$ dB, these ratios increase to $0.72$ and $0.69$. MLVS reaches $0.53$ at $25$ dB, whereas the EKF and no-tracking baselines remain below $0.15$. These results show that adaptive parameter control preserves the gain threshold through abrupt motion rather than only improving the averaged gain over time.

\Cref{fig:parameters_sharp} compares the adaptive and constant-parameter configurations under the sharp-turn trajectory at an SNR of $20$ dB. MLE-CTS maintains $T_H=33.33$ ms, $\Delta T=5.00$ ms, $\omega=0.75$, and $p=13$ throughout the tracking process. In contrast, MLE-ATS adjusts the observation-window length and polynomial model dimension according to the corrected estimation-risk scores, shortens the update interval when the tail estimation-risk score exceeds the head score, and controls the feedback ratio using the whole-interval estimation-risk score. Around sharp turns, the selected dimension may increase or decrease depending on the local data regime and the number of available feedback observations.


\section{Extension to UPA}
\label{sec:upa}
We extend the proposed framework from ULA-based two-dimensional
angle--range tracking to UPA-based three-dimensional
azimuth--elevation--range tracking. The MLE formulation, posterior
approximation, and adaptive parameter-adjustment structure extend by
replacing the trajectory parameter vector with its three-dimensional
counterpart.

\subsection{UPA Channel Model}
Consider an $N_y\times N_z$ UPA with $N=N_yN_z$ antenna elements.
The array is centered at the origin and lies on the $y$--$z$ plane.
The position of the $(i,j)$-th antenna element is
\(
\mathbf u_{i,j}
=
[0,\delta_i d,\delta_j d]^{\mathsf T},
\) where
\(
\delta_i=\frac{2i-N_y+1}{2},
\)
\(
\delta_j=\frac{2j-N_z+1}{2},
\) $i\in\{0,\ldots,N_y-1\}$ and
$j\in\{0,\ldots,N_z-1\}$.
In spherical coordinates, the propagation distance between a UE at range $r$, azimuth $\theta$, and elevation
$\phi$ and the
$(i,j)$-th antenna element is $r_{i,j}(r,\theta,\phi) = \|\mathbf p(r,\theta,\phi)-\mathbf u_{i,j}\|_2$
or equivalently,
{\small\( r_{i,j}(r,\theta,\phi) \triangleq \Big( r^2+(\delta_i^2+\delta_j^2)d^2 -2rd\big(\delta_i\cos\phi\sin\theta+\delta_j\sin\phi\big) \Big)^{\frac{1}{2}}, \)}
where $\theta$ denotes the azimuth angle, $\phi$ denotes the
elevation angle, and the Cartesian coordinate of the UE is $\mathbf{p}(r, \theta, \phi) = [r \cos \phi \cos \theta, r \cos \phi \sin \theta, r \sin \phi]^{\mathsf T}$.
Accordingly, the UPA near-field steering vector
$\mathbf b_{\mathrm{UPA}}(r,\theta,\phi)\in\mathbb C^N$
is defined element-wise as
\begin{equation}\label{eq:upa_b_def}
[
\mathbf b_{\mathrm{UPA}}(r,\theta,\phi)
]_{q(i,j)}
=
\frac{1}{\sqrt N}
\exp
\left(
-j\frac{2\pi}{\lambda}
(r_{i,j}(r,\theta,\phi)-r)
\right),
\end{equation}
where the row-major vectorization index is
$
q(i,j)=iN_z+j+1 .
$
The UE position is represented by the time-varying spherical
coordinates $(r(t),\theta(t),\phi(t))$. Following the ULA channel
model, the LoS component is
\begin{equation}\label{eq:upa_los_vec}
\mathbf h^{\mathrm{LoS}}_{t,\mathrm{UPA}}
=
g(t)
e^{-j\frac{2\pi r(t)}{\lambda}}
\mathbf b_{\mathrm{UPA}}
(r(t),\theta(t),\phi(t)),
\end{equation}
where
\(
g(t)=\frac{\lambda}{4\pi r(t)}
\)
is the free-space path gain.
For the NLoS component, assuming stationary scatterers, the reflected
paths are modeled as
\begin{equation}
\mathbf h^{\mathrm{NLoS}}_{t,\mathrm{UPA}}
=
\sum_{l=1}^{L-1}
g_l(t)
e^{-j\frac{2\pi}{\lambda}
(r_{l,1}+r_{l,2}(t))}
\mathbf b_{\mathrm{UPA}}
(r_{l,1},\theta_{l,1},\phi_{l,1}),
\end{equation}
where
\(
g_l(t)
=
\frac{\lambda p_l}
{4\pi (r_{l,1}+r_{l,2}(t))} .
\)
The complete UPA channel is therefore denoted by $\mathbf h_{t,\mathrm{UPA}}
=
\mathbf h^{\mathrm{LoS}}_{t,\mathrm{UPA}}
+
\mathbf h^{\mathrm{NLoS}}_{t,\mathrm{UPA}} .$
Similar to the ULA setting, the estimator focuses on the dominant
LoS component, while the weaker NLoS contribution is treated as model
mismatch.

\subsection{Extension of MLE-CTS and MLE-ATS}
The UPA extension introduces an additional elevation trajectory
dimension. Within the observation window used at update $m$, the local azimuth, range, and elevation trajectories are represented as $\widehat{\theta}_m(t')=\sum_{\ell=0}^{p_{\theta,m}}\alpha_{m,\ell}t'^{\ell},\quad \widehat r_m(t')=\sum_{\ell=0}^{p_{r,m}}\beta_{m,\ell}t'^{\ell},\quad \widehat{\phi}_m(t')=\sum_{\ell=0}^{p_{\phi,m}}\gamma_{m,\ell}t'^{\ell}.$  
We define the corresponding coefficient vectors as $\boldsymbol\alpha_m \triangleq [\alpha_{m,0},\ldots,\alpha_{m,p_{\theta,m}}]^{\mathsf T}$, 
$\boldsymbol\beta_m \triangleq [\beta_{m,0},\ldots,\beta_{m,p_{r,m}}]^{\mathsf T}$, and
$\boldsymbol\gamma_m \triangleq [\gamma_{m,0},\ldots,\gamma_{m,p_{\phi,m}}]^{\mathsf T}$.

Therefore, the UPA trajectory-parameter vector is
\( \boldsymbol\eta_m^{\mathrm{UPA}} \triangleq \left[ \boldsymbol\alpha_m^{\mathsf T}, \boldsymbol\beta_m^{\mathsf T}, \boldsymbol\gamma_m^{\mathsf T} \right]^{\mathsf T} \in \mathbb R^{p_m^{\mathrm{UPA}}}, \)
where
\(
p_m^{\mathrm{UPA}}
=
p_{\theta,m}
+
p_{r,m}
+
p_{\phi,m}
+
3.
\)
For an arbitrary UPA trajectory-parameter vector $\boldsymbol\eta^{\mathrm{UPA}}$, evaluating $\widehat{\theta}_m(t')$, $\widehat r_m(t')$, and $\widehat{\phi}_m(t')$ at $t'_{u,m}$ and substituting the result into \Cref{eq:upa_los_vec} gives the predicted channel $\widehat{\mathbf h}_{m,\mathrm{UPA}}(t'_{u,m};\boldsymbol\eta^{\mathrm{UPA}})$. The corresponding noise-free received value is
\( \mu_{u,\mathrm{UPA}} \left( \boldsymbol\eta^{\mathrm{UPA}} \right) \triangleq \widehat{\mathbf h}_{m,\mathrm{UPA}}^{\mathsf H} (t'_{u,m};\boldsymbol\eta^{\mathrm{UPA}}) \mathbf w_u x_u. \)
The UPA MLE at update $m$ is therefore
\( \widehat{\boldsymbol\eta}_m^{\mathrm{UPA}} = \underset{ \boldsymbol\eta^{\mathrm{UPA}} \in \Theta_m^{\mathrm{UPA}} }{ \arg\min} \; J_m^{\mathrm{UPA}} \left( \boldsymbol\eta^{\mathrm{UPA}} \right), \)
where $\Theta_m^{\mathrm{UPA}}$ denotes the feasible UPA parameter set associated with dimension $p_m^{\mathrm{UPA}}$, and
\( J_m^{\mathrm{UPA}} \left( \boldsymbol\eta^{\mathrm{UPA}} \right) \triangleq \sum_{u\in\mathcal U_m} \left| y_u - \mu_{u,\mathrm{UPA}} \left( \boldsymbol\eta^{\mathrm{UPA}} \right) \right|^2. \)
The optimization is performed using the same Adam-based procedure as
in the ULA setting, with the corresponding closed-form gradients
provided in Appendix~\ref{app:grad}.

The Gaussian posterior approximation becomes
\( p \left( \boldsymbol\eta_m^{\mathrm{UPA}} \mid \mathcal H_m \right) \approx \mathcal N \left( \widehat{\boldsymbol\eta}_m^{\mathrm{UPA}}, \boldsymbol\Sigma_{\boldsymbol\eta,m}^{\mathrm{UPA}} \right), \)
where
\( \boldsymbol\Sigma_{\boldsymbol\eta,m}^{\mathrm{UPA}} = \left[ \frac{1}{\sigma^2} \nabla_{\boldsymbol\eta^{\mathrm{UPA}}}^{2} J_m^{\mathrm{UPA}} \left( \widehat{\boldsymbol\eta}_m^{\mathrm{UPA}} \right) \right]^{-1}. \)
During interval $m+1$, a UPA trajectory hypothesis is drawn at each
designated feedback position $k$ as
\( \widetilde{\boldsymbol\eta}_{m,k}^{\mathrm{UPA}} \sim \mathcal N \left( \widehat{\boldsymbol\eta}_m^{\mathrm{UPA}}, \boldsymbol\Sigma_{\boldsymbol\eta,m}^{\mathrm{UPA}} \right). \)
For $u=u(m+1,k)$, the sampled parameter vector produces the channel hypothesis $ \widetilde{\mathbf h}_{u,\mathrm{UPA}}
\triangleq
\widehat{\mathbf h}_{m,\mathrm{UPA}}
(t'_{u,m};\widetilde{\boldsymbol\eta}_{m,k}^{\mathrm{UPA}}) $, 
and the corresponding TS payload beam is
$\mathbf w_u^{\mathrm{TS}} =
\widetilde{\mathbf h}_{u,\mathrm{UPA}}/
\|\widetilde{\mathbf h}_{u,\mathrm{UPA}}\|_2.$
At non-feedback positions, the beam is constructed from
$\widehat{\boldsymbol\eta}_m^{\mathrm{UPA}}$
instead.
 
The adaptive tracking-parameter adjustment in
\Cref{sec:adaptive} extends directly by replacing the ULA parameter vector and model
dimension with $\boldsymbol\eta_m^{\rm UPA}$ and
$p_m^{\rm UPA}$, respectively. We set $p_{\theta,m}=p_{\phi,m}=\lfloor(p_m^{\rm UPA}-3)/3\rfloor$ and
$p_{r,m}=p_m^{\rm UPA}-3-p_{\theta,m}-p_{\phi,m}$.
Because Lemmas~1 and 2 and the subset-optimism expansion are
stated for a generic real-valued trajectory-parameter vector, the corrected candidate score, subset-information correction, and update rules for $\Delta T_m$ and $\omega_m$ extend directly to the UPA model after replacing the ULA parameter vector and gradients with their UPA counterparts.

\subsection{UPA Simulation Results}

\begin{figure}[t]
    \centering
    \includegraphics[width=0.8\linewidth]{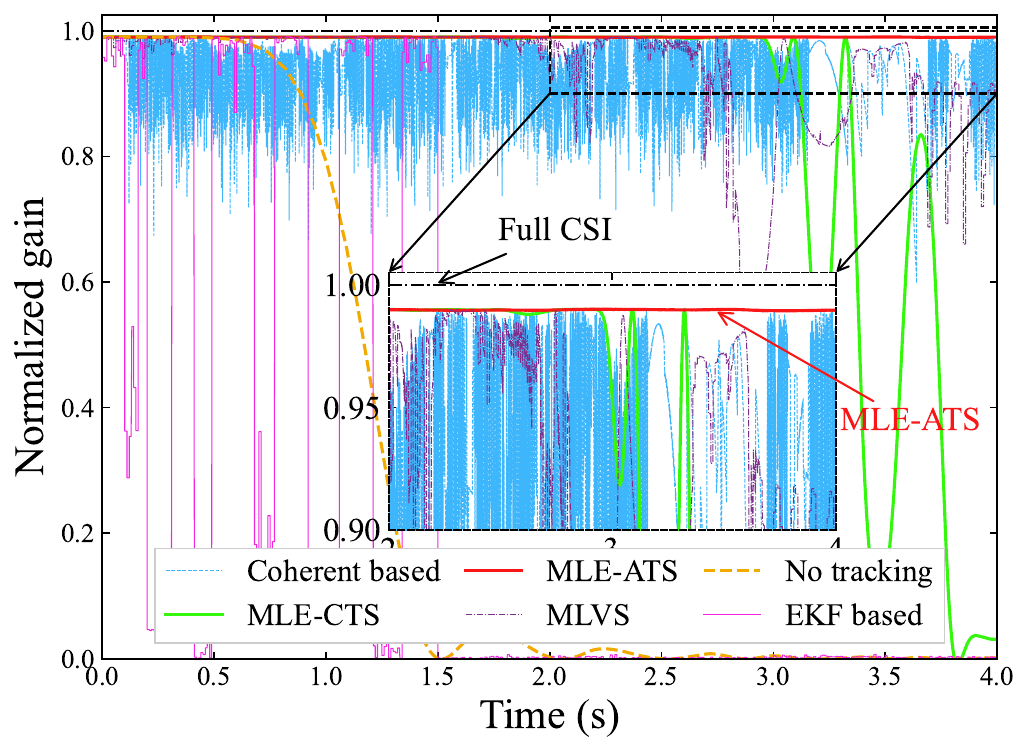}
    \caption{Normalized beamforming gain over time for three-dimensional UPA tracking.}
    \label{fig:gain_UPA}
\end{figure}

\begin{figure}[t]
    \centering
    \includegraphics[width=0.8\linewidth]{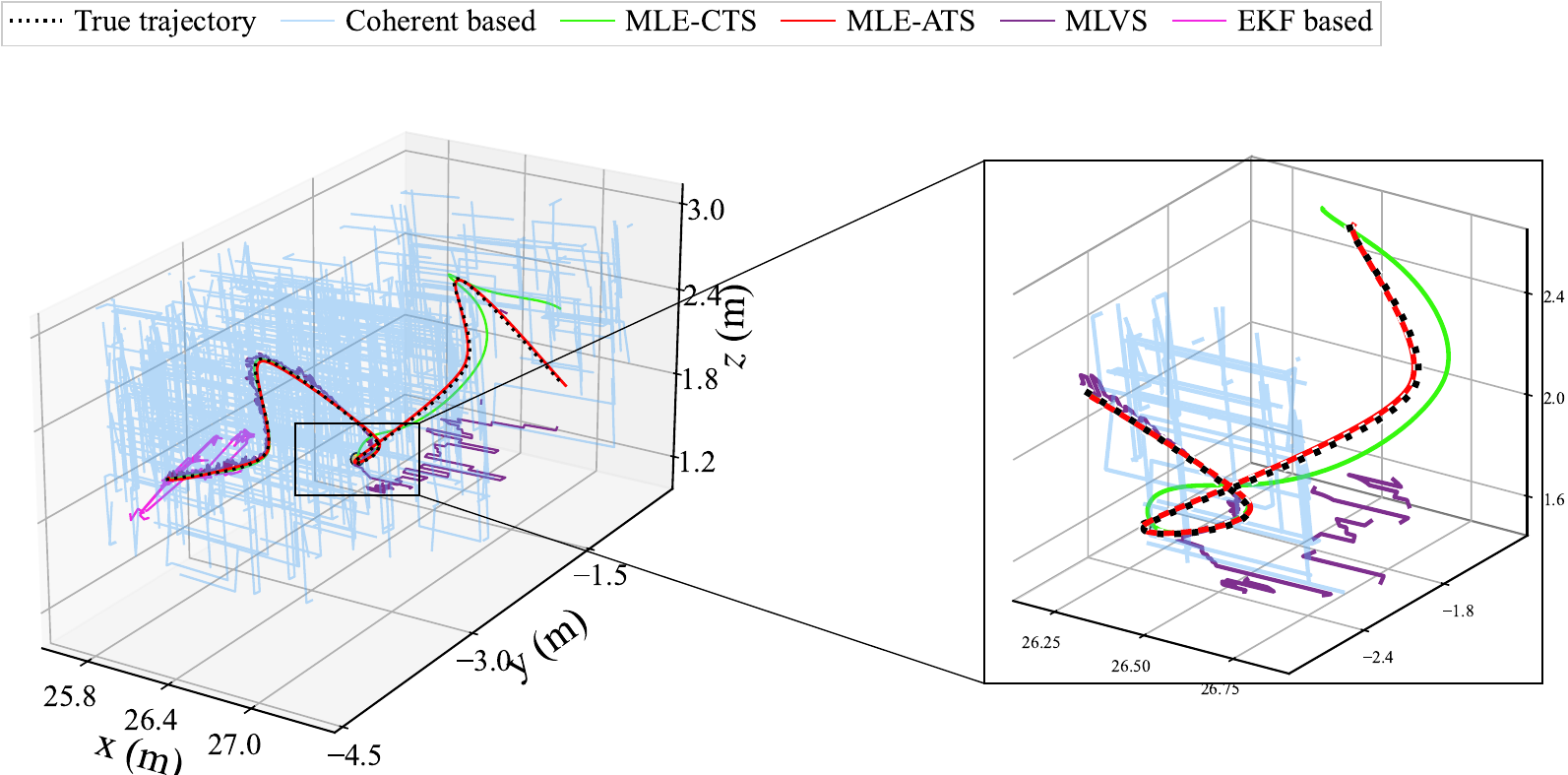}
    \caption{Three-dimensional sharp-turn trajectory-tracking comparison under the UPA setting.}
    \label{fig:traj_UPA}
\end{figure}

The simulation uses an $N_y\times N_z=128\times8$ UPA and a three-dimensional UE trajectory. The carrier frequency,
simulation duration, and symbol duration are identical to the ULA
setting.
As shown in \Cref{fig:gain_UPA,fig:traj_UPA}, MLE-ATS maintains a high normalized beamforming gain close to the full-CSI baseline. The MLE-CTS gain degrades when trajectory variations become pronounced, while the coherence-time-based scheme exhibits substantial trajectory error because of local beam sweeping. These results indicate that the adaptive tracking mechanism remains effective after incorporating elevation.

\section{Conclusion}\label{sec:con}

In this paper, we developed an adaptive payload-aided near-field beam-tracking framework based on MLE and TS. Local polynomial trajectory models are estimated from selected received payload samples, while a Gaussian approximation centered at the MLE and constructed from the observed Fisher information serves as a local surrogate for trajectory uncertainty. 
Posterior-sampled payload beams are used at
feedback positions for uncertainty-aware probing, and MLE-predicted
beams are used otherwise for exploitation. We further developed an
adaptive tracking-parameter mechanism that coordinates the
observation-window length, trajectory-model dimension, update
interval, and feedback ratio using estimation-risk scaling and
received-signal residual statistics. ULA simulations demonstrate high payload-accounted mean normalized beamforming gain, low payload-accounted normalized-gain variance, and high effective-symbol ratios under the considered trajectories, with substantial robustness gains under sharp-turn mobility. The UPA results demonstrate the extension to three-dimensional azimuth--elevation--range tracking. The proposed framework therefore
maintains near-field alignment from payload observations without
inserting dedicated non-payload beam-sweeping symbols during
steady-state tracking.


\appendices
\section{Closed-Form Gradient Derivation}
\label{app:grad}

For both ULA and UPA, let $e_u\triangleq y_u-\mu_u$ for $u\in\mathcal U_m$, so that $J_m=\sum_{u\in\mathcal U_m}|e_u|^2$.
Then, for any parameter block $\boldsymbol{\xi}$,
\begin{equation}
\nabla_{\boldsymbol{\xi}}J_m
=
-2\Re\!\left\{
\sum_{u\in\mathcal U_m}
e_u^*
\left(
\frac{\partial \hat{\mathbf h}_u}{\partial \boldsymbol{\xi}}
\right)^{\mathsf H}\mathbf w_ux_u
\right\}.
\label{eq:app_grad_generic}
\end{equation}

\subsection{ULA}
\label{app:grad_ula}

Let $s_{u,m}\triangleq t'_{u,m}$, $\widehat\theta_u\triangleq\widehat\theta_m(s_{u,m})$, $\widehat r_u\triangleq\widehat r_m(s_{u,m})$, and $a_n\triangleq\delta_nd$. Define
\begin{equation}
\hat\rho_{u,n}
\triangleq
\sqrt{\hat r_u^2+a_n^2-2\hat r_u a_n\sin\hat\theta_u}.
\end{equation}
Using \eqref{eq:h_est}, the $n$-th entry of the predicted ULA channel is $\hat{\mathbf h}_{u,\mathrm{ULA}}(n)
=
\frac{\lambda}{4\pi\sqrt N\,\hat r_u}
\exp\!\left(-j\kappa \hat\rho_{u,n}\right),
\kappa\triangleq\frac{2\pi}{\lambda}$.

Its derivatives with respect to $\hat\theta_u$ and $\hat r_u$ are
\begin{equation}
\frac{\partial \hat{\mathbf h}_{u,\mathrm{ULA}}(n)}{\partial \hat\theta_u}
=
j\kappa\,\hat{\mathbf h}_{u,\mathrm{ULA}}(n)
\frac{\hat r_u a_n\cos\hat\theta_u}{\hat\rho_{u,n}},
\end{equation}
\begin{equation}
\frac{\partial \hat{\mathbf h}_{u,\mathrm{ULA}}(n)}{\partial \hat r_u}
=
\hat{\mathbf h}_{u,\mathrm{ULA}}(n)
\left(
-\frac{1}{\hat r_u}
-j\kappa\frac{\hat r_u-a_n\sin\hat\theta_u}{\hat\rho_{u,n}}
\right).
\end{equation}

Since $\widehat\theta_m(t')=\sum_{\ell=0}^{p_{\theta,m}}\alpha_{m,\ell}t'^\ell$ and
$\widehat r_m(t')=\sum_{\ell=0}^{p_{r,m}}\beta_{m,\ell}t'^\ell$,
we have
\begin{equation}
\frac{\partial \hat{\mathbf h}_{u,\mathrm{ULA}}(n)}{\partial \alpha_{m,\ell}}
=
\frac{\partial \hat{\mathbf h}_{u,\mathrm{ULA}}(n)}{\partial \hat\theta_u}s_{u,m}^\ell,
\end{equation}
\begin{equation}
\frac{\partial \hat{\mathbf h}_{u,\mathrm{ULA}}(n)}{\partial \beta_{m,\ell}}
=
\frac{\partial \hat{\mathbf h}_{u,\mathrm{ULA}}(n)}{\partial \hat r_u}s_{u,m}^\ell.
\end{equation}
Stacking the element-wise derivatives over $n\in\mathcal N$ and substituting them into \eqref{eq:app_grad_generic} yields the gradients with respect to $\boldsymbol{\alpha}$ and $\boldsymbol{\beta}$.

\subsection{UPA}
\label{app:grad_upa}

Let $s_{u,m}\triangleq t'_{u,m}$, $\widehat\theta_u\triangleq\widehat\theta_m(s_{u,m})$, $\widehat r_u\triangleq\widehat r_m(s_{u,m})$, and $\widehat\phi_u\triangleq\widehat\phi_m(s_{u,m})$. Define $q_{u,i,j}
\triangleq
\delta_i\cos\hat\phi_u\sin\hat\theta_u+\delta_j\sin\hat\phi_u$, and $\hat\rho_{u,i,j}
\triangleq
\sqrt{\hat r_u^2+d^2(\delta_i^2+\delta_j^2)-2\hat r_ud\,q_{u,i,j}}.$
From \Cref{eq:upa_los_vec}, the $(i,j)$-th entry of the predicted UPA channel is $\hat{\mathbf h}_{u,\mathrm{UPA}}(i,j)
=
\frac{\lambda}{4\pi\sqrt N\,\hat r_u}
\exp\!\left(-j\kappa \hat\rho_{u,i,j}\right),
\kappa\triangleq\frac{2\pi}{\lambda}$.
Its derivatives with respect to $\hat\theta_u$, $\hat r_u$, and $\hat\phi_u$ are
\begin{equation}
\frac{\partial \hat{\mathbf h}_{u,\mathrm{UPA}}(i,j)}{\partial \hat\theta_u}
=
j\kappa\,\hat{\mathbf h}_{u,\mathrm{UPA}}(i,j)
\frac{\hat r_ud\,\delta_i\cos\hat\phi_u\cos\hat\theta_u}{\hat\rho_{u,i,j}},
\end{equation}
\begin{equation}
\frac{\partial \hat{\mathbf h}_{u,\mathrm{UPA}}(i,j)}{\partial \hat r_u}
=
\hat{\mathbf h}_{u,\mathrm{UPA}}(i,j)
\left(
-\frac{1}{\hat r_u}
-j\kappa\frac{\hat r_u-d\,q_{u,i,j}}{\hat\rho_{u,i,j}}
\right),
\end{equation}
\begin{equation}
\begin{aligned}
\frac{\partial \hat{\mathbf h}_{u,\mathrm{UPA}}(i,j)}{\partial \hat\phi_u}
&=
-j\kappa\,\hat{\mathbf h}_{u,\mathrm{UPA}}(i,j) \\
&\quad\times
\frac{\hat r_ud\big(\delta_i\sin\hat\phi_u\sin\hat\theta_u-\delta_j\cos\hat\phi_u\big)}
{\hat\rho_{u,i,j}} .
\end{aligned}
\end{equation}

Since $\widehat\theta_m(t')=\sum_{\ell=0}^{p_{\theta,m}}\alpha_{m,\ell}t'^\ell$,
$\widehat r_m(t')=\sum_{\ell=0}^{p_{r,m}}\beta_{m,\ell}t'^\ell$, and
$\widehat\phi_m(t')=\sum_{\ell=0}^{p_{\phi,m}}\gamma_{m,\ell}t'^\ell$,
we have
\begin{equation}
\frac{\partial \hat{\mathbf h}_{u,\mathrm{UPA}}(i,j)}{\partial \alpha_{m,\ell}}
=
\frac{\partial \hat{\mathbf h}_{u,\mathrm{UPA}}(i,j)}{\partial \hat\theta_u}s_{u,m}^\ell .
\end{equation}
\begin{equation}
\frac{\partial \hat{\mathbf h}_{u,\mathrm{UPA}}(i,j)}{\partial \beta_{m,\ell}}
=
\frac{\partial \hat{\mathbf h}_{u,\mathrm{UPA}}(i,j)}{\partial \hat r_u}s_{u,m}^\ell .
\end{equation}
\begin{equation}
\frac{\partial \hat{\mathbf h}_{u,\mathrm{UPA}}(i,j)}{\partial \gamma_{m,\ell}}
=
\frac{\partial \hat{\mathbf h}_{u,\mathrm{UPA}}(i,j)}{\partial \hat\phi_u}s_{u,m}^\ell .
\end{equation}
Stacking the element-wise derivatives according to \eqref{eq:upa_b_def} and substituting them into \eqref{eq:app_grad_generic} yields the gradients with respect to $\boldsymbol{\alpha}$, $\boldsymbol{\beta}$, and $\boldsymbol{\gamma}$.

\section{Gaussian Approximation for the Posterior}
\label{app:laplace_fim}
We derive the Gaussian approximation for the posterior distribution in~\Cref{eq:post_dist,eq:fim}.
Based on \Cref{eq:likelihood}, under a locally flat prior, the posterior distribution satisfies
\( p(\boldsymbol\eta\mid\mathcal H_m)\propto\exp\left(-J_m(\boldsymbol\eta)/\sigma^2\right). \)
We show that $p(\cdot\mid\mathcal H_m)$ is approximated by
$\mathcal N(\widehat{\boldsymbol\eta}_m,\boldsymbol\Sigma_{\boldsymbol\eta,m})$, where
\( \boldsymbol\Sigma_{\boldsymbol\eta,m}\triangleq\left(\sigma^{-2}\nabla_{\boldsymbol\eta}^2J_m(\widehat{\boldsymbol\eta}_m)\right)^{-1}. \)
We work under the following assumptions.
\begin{enumerate}
    \item The Hessian matrix at the MLE $\widehat{\boldsymbol\eta}_m$ satisfies $|\mathcal{U}_m|^{-1} \nabla_{\boldsymbol{\eta}}^2 J_m \big( \widehat{\boldsymbol{\eta}}_m \big) \succeq \lambda_0 \mathbf I_{p_m}.$

    \item For each $u \in \mathcal{U}_m$, we have
    {\small\begin{align*}
        |y_u - \mu_u (\boldsymbol{\eta})| &\leq L (1 + \Vert \boldsymbol{\eta} - \widehat{\boldsymbol{\eta}}_m \Vert_2), \; \Vert\nabla_{\boldsymbol{\eta}} \mu_u (\boldsymbol{\eta})\Vert_2 \leq L,\\
        \Vert \nabla_{\boldsymbol{\eta}}^2 \mu_u (\boldsymbol{\eta})\Vert_2 &\leq L,  \; \Vert \nabla_{\boldsymbol{\eta}}^3 \mu_u (\boldsymbol{\eta})\Vert_2 \leq L.
    \end{align*}}
    \item The posterior distribution concentrates around the MLE. In particular, we have
    \begin{align*}
        \Big\{ \int \Vert \boldsymbol{\eta} - \widehat{\boldsymbol{\eta}}_m\Vert_2^6 p ( \boldsymbol{\eta} \mid \mathcal{H}_m) d \boldsymbol{\eta} \Big\}^{1/6} \leq c_0 / \sqrt{|\mathcal{U}_m|}.
    \end{align*}
\end{enumerate}
The first assumption requires the Fisher information to be non-degenerate, which is standard in likelihood-based estimation and can be verified empirically. The second assumption requires $\mu_u$ to be sufficiently smooth, which can be verified from the expressions above. The third assumption requires the posterior to concentrate around the MLE; see Theorem 3.1 of \cite{mou2024diffusion} for a proof of such results.

Under the above assumptions, the KL divergence satisfies
\begin{align}
D_{\mathrm{KL}}\!\left(p(\cdot\mid\mathcal H_m)\,\middle\|\,\mathcal N(\widehat{\boldsymbol\eta}_m,\boldsymbol\Sigma_{\boldsymbol\eta,m})\right)
\leq \frac{a_0}{|\mathcal U_m|},
\label{eq:gaussian-approx-kl}
\end{align}
for a constant $a_0>0$ depending only on $(\lambda_0,L,c_0,\sigma)$.
Define the gradient mismatch
\begin{equation}
\mathbf d_m(\boldsymbol\eta)
\triangleq
\nabla J_m(\boldsymbol\eta)
-\sigma^2\boldsymbol\Sigma_{\boldsymbol\eta,m}^{-1}
(\boldsymbol\eta-\widehat{\boldsymbol\eta}_m).
\end{equation}
To study the approximation error, we use the method in Proposition 3.4 of \cite{mou2024diffusion}, which implies
{\small\begin{multline}
D_{\mathrm{KL}}\!\left(p(\cdot\mid\mathcal H_m)\,\middle\|\,\mathcal N(\widehat{\boldsymbol\eta}_m,\boldsymbol\Sigma_{\boldsymbol\eta,m})\right)\\
\leq \frac{\lambda_{\max}(\boldsymbol\Sigma_{\boldsymbol\eta,m})}{\sigma^4}
\int \|\mathbf d_m(\boldsymbol\eta)\|_2^2
p(\boldsymbol\eta\mid\mathcal H_m)d\boldsymbol\eta.
\label{eq:bvm-thm}
\end{multline}}
By the first assumption,
\( \lambda_{\max}(\boldsymbol\Sigma_{\boldsymbol\eta,m})\leq\sigma^2/(\lambda_0|\mathcal U_m|). \)

Since $\sigma^2\boldsymbol\Sigma_{\boldsymbol\eta,m}^{-1}=\nabla^2J_m(\widehat{\boldsymbol\eta}_m)$, the gradient mismatch satisfies
\begin{multline}
\|\mathbf d_m(\boldsymbol\eta)\|_2
\leq\int_0^1
\Big\|\nabla^2J_m\big((1-\gamma)\boldsymbol\eta+\gamma\widehat{\boldsymbol\eta}_m\big)\\
{}-\nabla^2J_m(\widehat{\boldsymbol\eta}_m)\Big\|_2
\left\|\boldsymbol\eta-\widehat{\boldsymbol\eta}_m\right\|_2d\gamma.
\end{multline}
Using the second assumption, for any $\boldsymbol\eta$,
\begin{align}
\left\|\nabla^3J_m(\boldsymbol\eta)\right\|_2
\leq4L^2\left(1+\left\|\boldsymbol\eta-\widehat{\boldsymbol\eta}_m\right\|_2\right)|\mathcal U_m|.
\end{align}
Consequently,
\begin{equation}
\|\mathbf d_m(\boldsymbol\eta)\|_2
\leq
4L^2|\mathcal U_m|
\left(
\left\|\boldsymbol\eta-\widehat{\boldsymbol\eta}_m\right\|_2^2
+
\left\|\boldsymbol\eta-\widehat{\boldsymbol\eta}_m\right\|_2^3
\right).
\end{equation}
Finally, applying the third assumption and Jensen's inequality gives
\begin{align}
&\int\|\mathbf d_m(\boldsymbol\eta)\|_2^2p(\boldsymbol\eta\mid\mathcal H_m)d\boldsymbol\eta\nonumber\\
&\leq32L^4|\mathcal U_m|^2\int\Big(\|\boldsymbol\eta-\widehat{\boldsymbol\eta}_m\|_2^4+\|\boldsymbol\eta-\widehat{\boldsymbol\eta}_m\|_2^6\Big)p(\boldsymbol\eta\mid\mathcal H_m)d\boldsymbol\eta\nonumber\\
&\leq32L^4\left(c_0^4+\frac{c_0^6}{|\mathcal U_m|}\right)
\leq64L^4\max(1,c_0)^6.
\end{align}
Substitution into \Cref{eq:bvm-thm} yields
\begin{align}
D_{\mathrm{KL}}\!\left(p(\cdot\mid\mathcal H_m)\,\middle\|\,\mathcal N(\widehat{\boldsymbol\eta}_m,\boldsymbol\Sigma_{\boldsymbol\eta,m})\right)
\leq\frac{64L^4\max(1,c_0)^6}{\sigma^2\lambda_0|\mathcal U_m|},
\end{align}
which proves \Cref{eq:gaussian-approx-kl}.

\section{Proof of the Chi-Square Characterization of the Estimation Risk}
\label{app:chi-square}

We prove Lemma~\ref{lem:chi_square_risk} for a generic real-valued local trajectory-parameter vector; the result applies to both the ULA and UPA models. Let $\mathcal S_\nu$ be a feedback-index set with $|\mathcal S_\nu|=\nu$, and let the fixed-dimensional true local parameter be $\boldsymbol\eta^\star\in\mathbb R^p$. Under the correctly specified local-model assumption, $y_u=\mu_u(\boldsymbol\eta^\star)+n_u$ for $u\in\mathcal S_\nu$, where $n_u\sim\mathcal{CN}(0,\sigma^2)$. Let $\ell_\nu(\boldsymbol\eta)\triangleq-\sigma^{-2}\sum_{u\in\mathcal S_\nu}|y_u-\mu_u(\boldsymbol\eta)|^2+\mathrm{const}$ denote the log-likelihood.

Define $\boldsymbol g_u\triangleq\nabla_{\boldsymbol\eta}\mu_u(\boldsymbol\eta^\star)\in\mathbb C^p$ and $\boldsymbol\Gamma_\nu\triangleq\nu^{-1}\sum_{u\in\mathcal S_\nu}\operatorname{Re}\{\boldsymbol g_u\boldsymbol g_u^{\mathsf H}\}$. We assume that $\boldsymbol\eta^\star$ is an interior point, the MLE $\widehat{\boldsymbol\eta}_\nu$ is consistent, $p$ remains fixed, the functions $\mu_u(\boldsymbol\eta)$ have uniformly bounded derivatives up to third order near $\boldsymbol\eta^\star$, $\boldsymbol\Gamma_\nu\overset{p}{\longrightarrow}\boldsymbol\Gamma\succ\mathbf 0$, and the normalized score satisfies the central limit condition.

For the complex Gaussian observation model, the score central limit theorem and the normalized-Hessian convergence give
\begin{equation}\frac{1}{\sqrt{\nu}}\nabla_{\boldsymbol\eta}\ell_\nu(\boldsymbol\eta^\star)\Rightarrow\mathcal N\!\left(\mathbf 0,\frac{2}{\sigma^2}\boldsymbol\Gamma\right),\quad-\frac{1}{\nu}\nabla_{\boldsymbol\eta}^2\ell_\nu(\boldsymbol\eta^\star)\overset{p}{\longrightarrow}\frac{2}{\sigma^2}\boldsymbol\Gamma.\end{equation}

Applying a first-order expansion of the MLE score equation $\nabla_{\boldsymbol\eta}\ell_\nu(\widehat{\boldsymbol\eta}_\nu)=\mathbf 0$ around $\boldsymbol\eta^\star$ and using Slutsky's theorem yield
\begin{equation}\sqrt{\nu}\left(\widehat{\boldsymbol\eta}_\nu-\boldsymbol\eta^\star\right)\Rightarrow\mathcal N\!\left(\mathbf 0,\frac{\sigma^2}{2}\boldsymbol\Gamma^{-1}\right).\end{equation}

Let $\boldsymbol\Delta_\nu\triangleq\widehat{\boldsymbol\eta}_\nu-\boldsymbol\eta^\star$. A first-order Taylor expansion of the predicted noise-free response gives
\begin{equation}\mu_u(\widehat{\boldsymbol\eta}_\nu)-\mu_u(\boldsymbol\eta^\star)=\boldsymbol g_u^{\mathsf T}\boldsymbol\Delta_\nu+\varepsilon_{u,\nu}^{\mathrm{Tay}},\qquad|\varepsilon_{u,\nu}^{\mathrm{Tay}}|=O_p\!\left(\|\boldsymbol\Delta_\nu\|_2^2\right).\end{equation}

The asymptotic normality above implies $\boldsymbol\Delta_\nu=O_p(\nu^{-1/2})$. Hence, uniformly bounded second derivatives give $\nu^{-1}\sum_{u\in\mathcal S_\nu}|\varepsilon_{u,\nu}^{\mathrm{Tay}}|^2=o_p(\nu^{-1})$, while the corresponding cross term is also $o_p(\nu^{-1})$ by the Cauchy--Schwarz inequality. Because $\boldsymbol\Delta_\nu$ is real-valued and $\boldsymbol\Gamma_\nu=\nu^{-1}\sum_{u\in\mathcal S_\nu}\operatorname{Re}\{\boldsymbol g_u\boldsymbol g_u^{\mathsf H}\}$, the estimation risk therefore satisfies
\begin{equation}R_{\mathcal S_\nu}(\widehat{\boldsymbol\eta}_\nu)=\boldsymbol\Delta_\nu^{\mathsf T}\boldsymbol\Gamma_\nu\boldsymbol\Delta_\nu+o_p\!\left(\nu^{-1}\right).\end{equation}

Define $\boldsymbol z_\nu\triangleq\sqrt{2\nu}\boldsymbol\Gamma_\nu^{1/2}\boldsymbol\Delta_\nu/\sigma$. The MLE limit and $\boldsymbol\Gamma_\nu\overset{p}{\longrightarrow}\boldsymbol\Gamma$ imply $\boldsymbol z_\nu\Rightarrow\mathcal N(\mathbf 0,\mathbf I_p)$. Consequently,
\begin{equation}\frac{2\nu R_{\mathcal S_\nu}(\widehat{\boldsymbol\eta}_\nu)}{\sigma^2}=\boldsymbol z_\nu^{\mathsf T}\boldsymbol z_\nu+o_p(1)\Rightarrow\chi_p^2,\end{equation}
which proves \Cref{eq:risk_chi_square}.

For the expected-estimation-risk result, assume that there exists $\delta>0$ such that $\sup_{\nu\geq1}\mathbb E[(2\nu R_{\mathcal S_\nu}(\widehat{\boldsymbol\eta}_\nu)/\sigma^2)^{1+\delta}]<\infty$. This uniform-integrability condition, together with the convergence above and $\mathbb E[\chi_p^2]=p$, gives $\mathbb E\!\left[R_{\mathcal S_\nu}(\widehat{\boldsymbol\eta}_\nu)\right]=\frac{\sigma^2p}{2\nu}+o\!\left(\nu^{-1}\right),$ which proves \Cref{eq:risk_mean}.

We next derive the more general same-sample optimism correction
used in Eqs.~\eqref{eq:candidate_residual_score} and
\eqref{eq:controller_corrected_score}. Let $\mathcal S$ denote the fitting
set, let $\mathcal A$ denote an arbitrary evaluation set, and define $\mathbf I(\mathcal B)
\triangleq
\frac{2}{\sigma^2}
\sum_{u\in\mathcal B}
\operatorname{Re}
\left\{
\mathbf g_u\mathbf g_u^{\mathsf H}
\right\}.$
A first-order expansion of the MLE score equation and fitted
residuals gives
$\mathbb{E}\left[\frac{1}{|\mathcal A|}\sum_{u\in\mathcal A}\left|y_u-\mu_u\left(\widehat{\boldsymbol\eta}_{\mathcal S}\right)\right|^2-\sigma^2\right]
=
\mathbb{E}\left[R_{\mathcal A}\left(\widehat{\boldsymbol\eta}_{\mathcal S}\right)\right]
-\frac{\sigma^2}{|\mathcal A|}\operatorname{tr}\left[\mathbf I(\mathcal A\cap\mathcal S)\mathbf I(\mathcal S)^{-1}\right]
+o\left(|\mathcal S|^{-1}\right).$
Consequently, $\frac{1}{|\mathcal A|}\sum_{u\in\mathcal A}\left|y_u-\mu_u\left(\widehat{\boldsymbol\eta}_{\mathcal S}\right)\right|^2-\sigma^2+\frac{\sigma^2}{|\mathcal A|}\operatorname{tr}\left[\mathbf I(\mathcal A\cap\mathcal S)\mathbf I(\mathcal S)^{-1}\right]$ provides a first-order estimator of $R_{\mathcal A}(\widehat{\boldsymbol\eta}_{\mathcal S})$. When $\mathcal A=\mathcal S$, the trace term simplifies to $p$, yielding the familiar correction $\sigma^2p/|\mathcal S|$.
This expansion is candidate-wise and treats each model configuration as fixed prior to selection. Hence, the correction accounts for the within-candidate fitting optimism.

\printbibliography

\vfill

\end{document}